\newtheorem{thm}{Theorem}
\newtheorem{prop}[thm]{Proposition}
\newtheorem{lem}[thm]{Lemma}
\newtheorem{cor}[thm]{Corollary}
\theoremstyle{definition}
\newtheorem{rem}[thm]{Remark}
\newtheorem{defn}[thm]{Definition}
\newtheorem{ex}[thm]{Example}
\newcommand{\cE}{\mathcal{E}}
\newcommand{\cF}{\mathcal{F}}
\newcommand{\cH}{\mathcal{H}}
\newcommand{\cU}{\mathcal{U}}
\newcommand{\cV}{\mathcal{V}}
\newcommand{\cW}{\mathcal{W}}
\newcommand{\mc}{\mathcal}
\newcommand{\mbb}{\mathbb}
\newcommand{\cB}{\mathcal{B}}
\DeclareMathOperator{\tr}{tr}
\DeclareMathOperator{\supp}{supp}
\newcommand{\lmax}{\lambda_{\mathrm{max}}}
\newcommand{\lmin}{\lambda_{\mathrm{min}}}
\newcommand{\one}{\mbb{I}}
\newcommand{\psucc}{p_{\mathrm{succ}}}
\newcommand{\ppgm}{p_{\mathrm{PGM}}}
\newcommand{\cT}{\mathcal{T}}
\definecolor{cool_green}{rgb}{0.0, 0.5, 0.0}
\begin{document}
	
\title{On the distinguishability of\\geometrically uniform quantum states}
\author[1]{Juntai Zhou\thanks{Corresponding author. \href{mailto:juntaiz2@illinois.edu}{juntaiz2@illinois.edu}}}
\author[2]{Stefano Chessa\thanks{\href{mailto:schessa@illinois.edu}{schessa@illinois.edu}}}
\author[2,3]{Eric Chitambar\thanks{\href{mailto:echitamb@illinois.edu}{echitamb@illinois.edu}}}
\author[1,3]{Felix Leditzky\thanks{\href{mailto:leditzky@illinois.edu}{leditzky@illinois.edu}}}

\affil[1]{\small Department of Mathematics, University of Illinois Urbana-Champaign}
\affil[2]{Department of Electrical and Computer Engineering, University of Illinois Urbana-Champaign}
\affil[3]{Illinois Quantum Information Science and Technology (IQUIST) Center, University of Illinois Urbana-Champaign}

\maketitle

\begin{abstract}
	A geometrically uniform (GU) ensemble is a uniformly weighted quantum state ensemble generated from a fixed state by a unitary representation of a finite group $G$. 
    In this work we analyze the problem of discriminating GU ensembles from various angles. 
    Assuming that the representation of $G$ is irreducible, we first show that a particular optimal measurement can be understood as the limit of weighted `pretty good measurements' (PGM).
    This naturally provides examples of state discrimination for which the unweighted PGM is provably sub-optimal. 
    We extend this analysis to certain reducible representations, and use Schur-Weyl duality to discuss two particular examples of GU ensembles in terms of Werner-type and permutation-invariant generator states. 
    For the case of pure-state GU ensembles we give a streamlined proof of optimality of the PGM first proved in [Eldar et al., 2004]. 
    We use this result to give simplified proofs of the optimality of the PGM, along with expressions for the corresponding success probabilities, for two tasks: 
    the hidden subgroup problem over semidirect product groups (first proved in [Bacon et al., 2005]), and port-based teleportation (first proved in [Mozrzymas et al., 2019] and [Leditzky, 2022]). 
    Finally, we consider the $n$-copy setting and adapt a result of [Montanaro, 2007] to derive a compact and easily evaluated lower bound on the success probability of the PGM for this task. 
    This result can be applied to the hidden subgroup problem to obtain a new proof for an upper bound on the sample complexity by [Hayashi et al., 2006].
\end{abstract}

\section{Introduction}

The problem of quantum state discrimination was studied even before the dawn of modern quantum information theory \cite{Helstrom-1976a,Holevo-2011a}.  The general task is to identify one of many possible state preparations for a given quantum system by measuring it in a judiciously chosen way.  From the experimenter's perspective, the possible states are described by a quantum state ensemble, $\mc{E}=(p_i,\rho_i)_{i=1}^N$, such that the given system has been prepared in state $\rho_i$ with probability $p_i$.  There are a variety of ways to quantify how well an experimenter can achieve the goal of identifying the prepared state.
Historically, much attention was given to maximizing the accessible information of the ensemble, which is the maximal mutual information attainable between the unknown state variable $i$ and the random variable of the experimenter's measurement outcome \cite{Fuchs-1996a}.  
This quantity is related to classical capacities of a quantum channel \cite{shor2003capacities,wilde2013quantum}.

A somewhat more user-friendly figure of merit, and the quantifier studied in this work, is the minimum average error probability in correctly guessing the state variable $i$ given the measurement outcome. 
Many quantum information- and learning-theoretic tasks are linked to quantum state discrimination in this minimum average error setting, for example randomness extraction \cite{koenig2009operational}, characterizing the fidelity of teleportation protocols \cite{ishizaka2008asymptotic,ishizaka2009quantum,chitambar2024teleportation}, bounding the sample complexity in learning tasks \cite{arunachalam2018optimal,arunachalam2020quantumcoupon,hadiashar2024optimal}, or solving computational problems such as the hidden subgroup problem \cite{bacon2005semidirect,bacon2006dihedral,childs2010algebraic}.
Mathematically, computing the minimum average error is equivalent to solving a semidefinite program \cite{watrous2018theory,siddhu2022five}, a type of convex optimization problem with efficient numerical solvers and a useful duality theory that is heavily used in this paper.
For a comprehensive overview of the quantum state discrimination problem, we encourage the reader to consult the excellent reviews \cite{Bergou-2004a, Barnett-2009a, Bae-2015a}.

Quantum state discrimination becomes much more tractable for ensembles with a high degree of symmetry. 
A particular type of symmetry is geometric uniformity \cite{ban1997optimum,eldar2001quantum,eldar2004optimal}: An ensemble $\mc{E}$ is said to be geometrically uniform if its elements can be generated by the action of some finite group.  Specifically, if $G$ is a group with unitary representation $g$, then $\mc{E}$ is geometrically uniform with respect to $G$ if its elements have uniform prior probability, and they can each be expressed as $\rho_g \coloneqq g\rho g^\dagger$ for some fixed generator state $\rho$ and $g\in G$.
Geometrically uniform (GU) ensembles can be found in many quantum information-theoretic tasks such as port-based teleportation, certain instances of the hidden subgroup problem, or the quantum coupon collector problem (see Examples \ref{ex:pbt}, \ref{ex:dihedral-hsg}, and \ref{ex:coupon}, respectively).  
Davies first recognized that if a GU ensemble is generated by an irreducible unitary representation \cite{Davies-1978a}, then its accessible information can be attained using a positive operator-valued measure (POVM) $\{\Pi_g\}_g$ that can also be generated by the action of the group, i.e. $\Pi_g= g\Pi g^\dagger$ for some $\Pi\geq 0$.  This property was later shown by Sasaki \textit{et al.} to also hold for GU ensembles whose generating group acts invariantly on just the space of real vectors \cite{Sasaki-1999a}, and Decker generalized the optimal POVM structure to be multiple orbits of the group action when dealing with reducible representations \cite{Decker-2009a}.  For the problem of minimum error state discrimination, Eldar \textit{et al.} proved that an optimal strategy can likewise be obtained by a covariant POVM of the form $\Pi_g= g\Pi g^\dagger$ for arbitrary group actions \cite{eldar2004optimal}. 
For pure-state GU ensembles, \textcite{krovi2015optimal} derived a representation-theoretic formula for the success probability and applied it to state ensembles arising in optical communication problems.
Despite this progress, however, there are many important GU ensembles whose state distinguishability can be analyzed even further by focusing on their particular types of symmetry.  The goal of this paper is to sketch the mathematical framework for performing such an analysis. 

\subsection{Main results and structure of the paper}

In this paper we discuss quantum state discrimination of geometrically uniform (GU) ensembles.
In particular, we investigate the role of the ``pretty good measurement'' (PGM) or square-root measurement \cite{belavkin1975optimal,holevo1979asymptotically,hausladen1994pretty} and its generalization to ``power-PGMs'' \cite{tyson2009two-sided,tyson2009weighted} for discriminating GU ensembles, as well as general ensembles (in \Cref{sec:general-GU}).

We start in Section \ref{sec:preliminaries} with preliminaries related to quantum state discrimination.
We prove that the task of discriminating block-diagonal states can be reduced to first measuring the blocks and then discriminating the states within the blocks, possibly ignoring any redundant parts that may arise, such as in the Koashi-Imoto decomposition \cite{koashi2002,hayden2003,khanian2022mixed}.
This result is stated in \Cref{prop:block-diagonal} and \Cref{cor:block-diagonal}, and serves as a starting point for the following discussion.
We also provide some background on the hidden subgroup problem and port-based teleportation, tasks that are commonly studied through the lens of quantum state discrimination and discussed in later sections of this paper.

In Section \ref{sec:GU-ensembles} we formally introduce GU ensembles and present examples of this structure in the hidden subgroup problem, port-based teleportation, and the quantum coupon collector problem.

In Section \ref{sec:irreducible} we focus on GU ensembles defined via an irreducible representation of a finite group.
We prove that an optimal measurement is defined in terms of a subspace of the eigenspace corresponding to the largest eigenvalue of the generator, and we derive an explicit expression for the optimal success probability (\Cref{prop:geom uniform optimal}).
A special case of this optimal measurement can be understood as the $\alpha\to\infty$ limit of the $\alpha$-power-PGM (\Cref{prop:irrep-gu}), and we furthermore show that the success probabilities of the $\alpha$-power-PGMs are strictly increasing in $\alpha$.
This result gives examples where the (usual) PGM (with $\alpha=1$) is provably \emph{sub-optimal} (\Cref{sec:suboptimality-of-PGM}).
This analysis can also be applied to the state exclusion problem (\Cref{sec:exclusion}).

In Section \ref{sec:generators-symmetry} we generalize the above analysis of the optimal success probability and optimal measurement to certain GU ensembles defined in terms of a \emph{reducible} representation of a finite group. 
We exploit Schur-Weyl duality to give two examples of GU ensembles in terms of a Werner state generator (\Cref{prop:werner-generator}) and a permutation-invariant generator state (\Cref{prop:perm-inv-generator}).

In Section \ref{sec:pureGU} we turn to the special case of pure-state GU ensembles.
In this case, a result by \textcite{eldar2004optimal} shows that the PGM is the optimal measurement, and \textcite{dallapozza2015optimality} derived necessary and sufficient conditions for the optimality of the PGM for \emph{compound} pure-state GU ensembles consisting of the GU orbits of a collection of different generator states.
We give a simplified proof of the optimality of the PGM for pure-state GU ensembles in \Cref{prop:pure-GU-PGM-optimal} based on a simple expression for the PGM-success probability in this setting derived in \Cref{lem:GU-PGM}. 
We use this expression to give a new streamlined proof of the success probability of the coset state discrimination problem arising in hidden subgroup problems for semidirect products \cite{bacon2005semidirect}, and point out connections to the well-known Holevo-Curlander bounds on the optimal success probability, and the results of \textcite{hadiashar2024optimal} on the quantum coupon collector problem.
In \Cref{sec:pbt} we apply ideas from this discussion to discuss the optimal measurements for port-based teleportation.
This task involves a GU ensemble that does \emph{not} consist of pure states, yet the pure-state GU ensemble analysis developed above can still be applied when taking the natural symmetries of port-based teleportation into consideration.
We give a new proof of the optimality of the PGM for this task, which was previously shown in \cite{mozrzymas2018optimal,leditzky2020}.

In Section \ref{sec:general-GU} we return to the discussion of general GU ensembles.
We adapt a lower bound on the PGM-success probability by \textcite{montanaro2007distinguishability} to the GU setting to derive an analytical lower bound on the optimal success probability.
These arguments are applied to the hidden subgroup problem to give a new simplified proof of an upper bound on the sample complexity in solving this problem derived by \textcite{hayashi2006hiddensubgroup}, avoiding the use of the Hayashi-Nagaoka operator inequality \cite{hayashi2003general} altogether.
Finally, we make some concluding remarks in Section \ref{sec:conclusion}.

\section{Preliminaries}
\label{sec:preliminaries}

\subsection{Notation}
We denote by $\one$ the identity operator on a Hilbert space.
The smallest and largest eigenvalues of a Hermitian operator $X$ are denoted by $\lmin(X)$ and $\lmax(X)$, respectively.
The support of a linear operator acting on a Hilbert space is defined to be the orthogonal complement of its kernel.
For Hermitian operators $A,B$ we write $A\geq B$ if $A-B\geq 0$.
A quantum state is a linear positive semidefinite operator on a Hilbert space with unit trace.

\subsection{Quantum State Discrimination}
Let $\cE=(p_i,\rho_i)_{i=1}^N$ be a quantum state ensemble consisting of quantum states $\rho_1,\dots,\rho_N$ and a probability distribution $(p_1,\dots,p_N)$.
In the minimum average error setting, the optimal success probability $\psucc^*$ of distinguishing the states in $\cE$ is given by the following semidefinite program (SDP):
\begin{align}
	\psucc^* = \max\left\lbrace \sum\nolimits_{i=1}^N p_i \tr(M_i \rho_i) : M_i\geq 0 \text{ for all $i=1,\dots,N$}, \sum\nolimits_{i=1}^N M_i = \one\right\rbrace.
	\label{eq:primal}
\end{align}
The dual program gives the same value because of strong duality, and can be expressed as follows:
\begin{align}
	\psucc^* = \min\left\lbrace \tr K : K\geq p_i\rho_i \text{ for all $i=1,\dots,N$}\right\rbrace.
	\label{eq:dual}
\end{align}

A commonly considered measurement in state discrimination is the \emph{pretty good measurement }(PGM) or \emph{square-root measurement} \cite{belavkin1975optimal,holevo1979asymptotically,hausladen1994pretty}, defined as 
\begin{align}
	M_i=p_i\overline{\rho}^{-\frac{1}{2}}\rho_i\overline{\rho}^{-\frac{1}{2}} \label{eq:pgm}
\end{align} 
where $\overline{\rho}=\sum_i p_i\rho_i$ is the average state and the inverse is taken on the support of $\overline{\rho}$.
Without loss of generality, one can restrict the Hilbert space on which the states $\rho_i$ act to this support without changing the optimal value in \eqref{eq:primal} or \eqref{eq:dual}. 
One generalization of PGM is the $\alpha$-power-PGM \cite{tyson2009weighted,tyson2009two-sided}, defined for $\alpha\geq 1$ as
\begin{align}
	M_i^{(\alpha)}=\tilde{\rho}^{-\frac{1}{2}}(p_i\rho_i)^\alpha\tilde{\rho}^{-\frac{1}{2}} \label{eq:power-PGM}
\end{align}
where $\tilde{\rho}=\sum_i(p_i\rho_i)^\alpha$.
This measurement coincides with the PGM for $\alpha=1$, and with the \emph{quadratically-weighted measurement} from \cite{tyson2009two-sided} for $\alpha=2$.

Throughout the discussion, we will be using the following observation:
\begin{prop}
	\label{prop:block-diagonal}
    Let $\cH=\bigoplus_{j=1}^k V_j\otimes W_j$ be a direct-sum decomposition of a Hilbert space $\cH$ and let $(p_i,\rho_i)_{i=1}^{N}$ be an ensemble of quantum states on $\cH$ of the form
    \begin{align}
    \rho_i=\bigoplus_{j=1}^k q_{ij}\rho_{ij}\otimes\omega_j
    \end{align}
    for all $i=1,\dots,N$, where $\rho_{ij}$ and $\omega_j$ are quantum states for $j=1,\dots,k$. Then the discrimination task between $(p_i,\rho_i)_{i=1}^{N}$ can be reduced to discriminating between $(p_{ij},\rho_{ij})_{i=1}^{N}$ for each $j=1,\dots,k$, where $p_{ij}=\frac{p_iq_{ij}}{s_j}$ and $s_j=\sum_i p_iq_{ij}$. More precisely, denoting the optimal success probability of discriminating $(p_{ij},\rho_{ij})_{i=1}^{N}$ for fixed $j$ by $\psucc^*(j)$, we have 
    \begin{align}
    	\psucc^*=\sum_{j=1}^k s_j\psucc^*(j)
    \end{align}
    with optimal POVM
    \begin{align}
    	M_i=\bigoplus_{j=1}^k M_{ij}\otimes\one_{W_j}
    \end{align}
    where $(M_{ij})_{i=1}^{N}$ is an optimal POVM for discriminating $(p_{ij},\rho_{ij})_{i=1}^{N}$ for fixed $j=1,\dots,k$.
\end{prop}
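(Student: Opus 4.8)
The plan is to establish the claimed identity as a pair of matching inequalities: an achievability bound $\psucc^* \ge \sum_j s_j\,\psucc^*(j)$, witnessed by the proposed measurement, and a converse bound $\psucc^* \le \sum_j s_j\,\psucc^*(j)$ that holds for every POVM. For achievability I would substitute $M_i = \bigoplus_j M_{ij}\otimes\one_{W_j}$ directly into the primal program \eqref{eq:primal}. Its validity is immediate: positivity is inherited from the $M_{ij}$, and $\sum_i M_i = \bigoplus_j(\sum_i M_{ij})\otimes\one_{W_j} = \bigoplus_j \one_{V_j}\otimes\one_{W_j} = \one_\cH$ because each $(M_{ij})_i$ is a POVM on $V_j$. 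Since the summands of the direct sum are mutually orthogonal, the trace $\tr(M_i\rho_i)$ collapses to a single sum over $j$, and the factorized block $\rho_{ij}\otimes\omega_j$ together with $\tr\omega_j = 1$ gives $\tr(M_i\rho_i) = \sum_j q_{ij}\tr(M_{ij}\rho_{ij})$. Summing against $p_i$ and reindexing through $p_{ij} = p_iq_{ij}/s_j$ produces $\sum_j s_j\sum_i p_{ij}\tr(M_{ij}\rho_{ij})$, which equals $\sum_j s_j\,\psucc^*(j)$ once the $(M_{ij})_i$ are taken optimal in each block.

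For the converse I would take an arbitrary POVM $(M_i)_i$ on $\cH$ and exploit that the states are block diagonal. Writing $\Pi_j$ for the orthogonal projection onto $V_j\otimes W_j$, the relation $\rho_i = \sum_j \Pi_j\rho_i\Pi_j$ and cyclicity of the trace let me rewrite the objective as $\sum_i p_i\tr(M_i\rho_i) = \sum_j\sum_i p_iq_{ij}\tr\!\big(M_i^{(j)}(\rho_{ij}\otimes\omega_j)\big)$, where $M_i^{(j)} \coloneqq \Pi_j M_i\Pi_j$. For each fixed $j$ the family $(M_i^{(j)})_i$ is a genuine POVM on $V_j\otimes W_j$, since $\sum_i M_i^{(j)} = \Pi_j(\sum_i M_i)\Pi_j = \Pi_j$. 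This decouples the task into $k$ independent block problems, bounding the $j$-th contribution by $s_j$ times the optimal success probability of discriminating $(p_{ij}, \rho_{ij}\otimes\omega_j)_i$ on $V_j\otimes W_j$.

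The main obstacle, and the only step beyond formal manipulation, is to show that the redundant tensor factor $\omega_j$ may be discarded, i.e.\ that discriminating $(p_{ij},\rho_{ij}\otimes\omega_j)_i$ on $V_j\otimes W_j$ has the same optimal value $\psucc^*(j)$ as discriminating $(p_{ij},\rho_{ij})_i$ on $V_j$. One direction is trivial, as tensoring an optimal $V_j$-POVM with $\one_{W_j}$ preserves the objective. For the other direction I would map a $V_j\otimes W_j$-POVM $(N_i)_i$ to the $V_j$-operators $N_i' \coloneqq \tr_{W_j}\!\big[(\one\otimes\omega_j^{1/2})\,N_i\,(\one\otimes\omega_j^{1/2})\big]$. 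The symmetric sandwiching makes positivity manifest and, using $\tr\omega_j = 1$, gives $\sum_i N_i' = \one_{V_j}$, so $(N_i')_i$ is a valid POVM; a short computation with the cyclicity of the full trace and the identity $(\one\otimes\omega_j^{1/2})(\rho_{ij}\otimes\one)(\one\otimes\omega_j^{1/2}) = \rho_{ij}\otimes\omega_j$ then shows $\tr(N_i'\rho_{ij}) = \tr(N_i(\rho_{ij}\otimes\omega_j))$, so the success probability is unchanged. Feeding this equality into the per-block bound yields $\psucc^* \le \sum_j s_j\,\psucc^*(j)$, which matches the achievability bound and simultaneously certifies the proposed $M_i$ as optimal.
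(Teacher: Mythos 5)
Your proof is correct, and the achievability half coincides with the paper's (plugging $M_i=\bigoplus_j M_{ij}\otimes\one_{W_j}$ into the primal and collapsing the trace block by block). The converse, however, takes a genuinely different route. The paper works in the dual picture: it takes optimal dual solutions $K_j\geq p_{ij}\rho_{ij}$ with $\tr(K_j)=\psucc^*(j)$, lifts them to $K=\bigoplus_j s_jK_j\otimes\omega_j$, checks that $K\geq p_i\rho_i$ for all $i$, and reads off the upper bound from $\tr(K)$ -- a two-line argument, but one that invokes strong duality of the discrimination SDP (to identify $\tr(K_j)$ with $\psucc^*(j)$). You instead stay entirely in the primal: you pinch an arbitrary POVM with the block projectors $\Pi_j$, observe that $(\Pi_jM_i\Pi_j)_i$ is a POVM on each block, and then discard the redundant factor $\omega_j$ via the Heisenberg-picture map $N_i\mapsto\tr_{W_j}\bigl[(\one\otimes\omega_j^{1/2})N_i(\one\otimes\omega_j^{1/2})\bigr]$, which you correctly verify to be POVM-preserving and success-probability-preserving. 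What your approach buys is twofold: it needs no duality at all (only the primal formulation \eqref{eq:primal}), and it is more operationally transparent -- it shows explicitly that any measurement can be converted, without loss, into one that first projects onto a block and then measures within it, which directly substantiates the protocol structure asserted in \Cref{cor:block-diagonal}. What the paper's approach buys is brevity: feasibility of the lifted dual operator is immediate, whereas your argument requires the extra (if routine) lemma that the $\omega_j$ factor can be stripped off.
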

\begin{proof}
    For each $j=1,\dots,k$ let $K_j$ be a feasible operator in the dual SDP problem for the ensemble $(p_{ij},\rho_{ij})_{i=1}^{n}$, i.e., $K_j\geq p_{ij}\rho_{ij}$ for all $i$ and $\psucc^*(j)=\tr(K_j)$. The operator
    \begin{align}
    	K\coloneqq \bigoplus_{j=1}^k s_jK_j\otimes\omega_j
    \end{align}
    satisfies
    \begin{align}
    	K\geq p_i\bigoplus_{j=1}^k q_{ij}\rho_{ij}\otimes\omega_j=p_i\rho_i\quad\forall\:i
    \end{align}
     and
     \begin{align}
     \tr(K)=\sum_{j=1}^k s_j\tr(K_j)=\sum_{j=1}^k s_j\psucc^*(j).
     \end{align}
     Therefore,
     \begin{align}
     \psucc^*\leq\tr(K)=\sum_{j=1}^k s_j\psucc^*(j).\label{eq:psucc-UB}
     \end{align}
     On the other hand, given an optimal POVM $(M_{ij})_{i=1}^{n}$ for each $j$ to discriminate $(p_{ij},\rho_{ij})_{i=1}^{n}$, let
     \begin{align}
     	M_i=\bigoplus_{j=1}^k M_{ij}\otimes\one_{W_j}.
     \end{align}
 	Note that $\sum_i M_{ij} = \one_{V_j}$, ensuring that $\sum_i M_i=\one$.
     Then,
     \begin{align}
     	\sum_i p_i\tr(M_i\rho_i)=\sum_i\sum_j p_iq_{ij}\tr(M_{ij}\rho_{ij})=\sum_j s_j\bigg(\sum_i p_{ij}\tr(M_{ij}\rho_{ij})\bigg)=\sum_j s_j\psucc^*(j),
     \end{align}
 	and hence $\psucc^* \geq \sum_j s_j\psucc^*(j)$.
 	This shows that $M_i$ is indeed an optimal choice of POVM achieving the upper bound \eqref{eq:psucc-UB}, and thus $\psucc^* = \sum_j s_j\psucc^*(j)$.
\end{proof}

\begin{cor}
\label{cor:block-diagonal}
    Discriminating a block-diagonal ensemble is equivalent to discriminating within each block.
    More precisely, suppose that $(p_i,\rho_i)_i$ is an ensemble of states with each $\rho_i$ having the same block-diagonal structure. 
    Then an optimal discrimination measurement can be built by first measuring with respect to the block-diagonal structure and projecting onto one of these blocks, and then discriminating optimally among the projected states within the blocks.
\end{cor}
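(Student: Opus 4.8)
The plan is to deduce \Cref{cor:block-diagonal} as the special case of \Cref{prop:block-diagonal} in which each multiplicity space $W_j$ is trivial, i.e.\ one-dimensional. First I would make the block-diagonal hypothesis precise: saying that every $\rho_i$ shares the same block-diagonal structure means there is an orthogonal decomposition $\cH = \bigoplus_{j=1}^k V_j$ such that $\rho_i = \bigoplus_{j=1}^k \sigma_{ij}$ with each $\sigma_{ij}\geq 0$ supported on $V_j$. Setting $q_{ij} \coloneqq \tr(\sigma_{ij})$ and, whenever $q_{ij}>0$, defining the normalized state $\rho_{ij} \coloneqq \sigma_{ij}/q_{ij}$, I recover exactly the hypothesis of \Cref{prop:block-diagonal} with $W_j = \bC$ and $\omega_j = 1$, so that the tensor factor disappears. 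Blocks with $q_{ij}=0$ carry no weight for state $i$ and can be assigned an arbitrary fixed $\rho_{ij}$ without affecting any trace, so they cause no difficulty.

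Next I would invoke the proposition directly. It yields $\psucc^* = \sum_{j=1}^k s_j\,\psucc^*(j)$, with $s_j = \sum_i p_i q_{ij}$ the total probability weight landing in block $j$, and it produces an optimal POVM of the block-diagonal form $M_i = \bigoplus_{j=1}^k M_{ij}$ (here $\one_{W_j}$ is the scalar $1$), where $(M_{ij})_i$ is optimal for the in-block ensemble $(p_{ij},\rho_{ij})_i$. This is the quantitative content of the corollary.

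The only remaining point is the operational reformulation, which is the step I expect to require the most care to phrase cleanly. I would argue that a block-diagonal POVM $\{M_i = \bigoplus_j M_{ij}\}$ is implemented by the two-stage procedure described: first apply the projective measurement $\{P_j\}_{j=1}^k$ onto the subspaces $V_j$, obtaining outcome $j$ with probability $s_j = \sum_i p_i q_{ij}$ and collapsing the averaged state onto block $j$; then, conditioned on outcome $j$, apply the in-block POVM $\{M_{ij}\}_i$ on $V_j$. Because each $\rho_i$ and each $M_i$ is block-diagonal with respect to $\{V_j\}$, the projections $P_j$ commute with both, so inserting this first measurement does not disturb the subsequent statistics; the joint probability of block $j$ and guess $i$ equals $p_i q_{ij}\tr(M_{ij}\rho_{ij})$, reproducing $\sum_i p_i \tr(M_i\rho_i)$ exactly. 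Summing over $j$ and optimizing each block independently then gives both the decomposition of $\psucc^*$ and the claimed two-stage optimal strategy, completing the proof.
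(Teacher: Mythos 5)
Your proposal is correct and takes essentially the same approach as the paper, whose proof is exactly the one-line specialization of \Cref{prop:block-diagonal} to $\dim W_j = 1$. The extra care you take---normalizing the blocks via $q_{ij}=\tr(\sigma_{ij})$, handling $q_{ij}=0$, and spelling out the two-stage (project-then-discriminate) implementation---is a sound elaboration of what the paper leaves implicit.
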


\begin{proof}
    Simply let $\dim W_j=1$ in \Cref{prop:block-diagonal}, then the blocks consist of the $\rho_{ij}$, and the corollary follows.
\end{proof}

\begin{rem}
    The Koashi-Imoto (KI) decomposition \cite{koashi2002} has many applications such as the classification of the structure of quantum Markov chains \cite{hayden2003,sutter2018}, quantum source compression \cite{khanian2022mixed}, or the task of quantum state merging \cite{yamasaki2019}. 
    Given an ensemble $(p_i,\rho_i)_i$ on a Hilbert space $\cH$, there exists a decomposition of $\cH$ such that
    \begin{align}
        \rho_i\cong\bigoplus\nolimits_j q_{ij}\rho_{ij}\otimes\omega_j.\label{eq:koashi-imoto}
    \end{align}
    Here, $q_{ij}\geq0$, and the $\rho_{ij}$ and $\omega_j$ are density matrices. The decomposition classifies the degrees of freedom in the ensemble into three parts: the classical part $q_{ij}$, the non-classical part $\rho_{ij}$, and the redundant part $\omega_j$~\cite{khanian2022mixed}.
    That is, for any unitary $U$ satisfying $T_U(\rho_i)=\rho_i$ for all $i$ where $T_U(\rho)\coloneqq \tr_E[U(\rho\otimes\sigma_E)U^\dagger]$ for some state $\sigma_E$ on an environment $E$, we have
    \begin{align}
        U\cong\bigoplus\nolimits_j\one_j\otimes U_j
    \end{align}
    for suitable unitaries $U_j$.
    Since the redundant part $\omega_j$ in \eqref{eq:koashi-imoto} is independent of the index $i$, \Cref{prop:block-diagonal} implies that discriminating the ensemble $(p_i,\rho_i)$ can be reduced to discriminating the ensembles $(p_{ij},\rho_{ij})$ for each $j$.
    This gives another operational explanation of the ``redundant'' part.
\end{rem}

\subsection{Hidden subgroup problem}
The Hidden Subgroup problem (HSP) is an important class of computational problems generalizing factoring and graph isomorphism \cite{childs2010algebraic}.
Given a subgroup $H$ of a group $G$ and a function $f\colon G\to S$ such that $f(g_1)=f(g_2)$ if and only if $g_1\in g_2H$, the hidden subgroup problem asks for the generating set of $H$ by querying the function $f$. 
In the ``standard method'' \cite{childs2010algebraic} one starts with considering a unitary $U_f\colon |G|^{-1/2}\sum_{g\in G}\ket{g,0}\mapsto |G|^{-1/2} \sum_{g\in G}\ket{g,f(g)}$.
Discarding the second register yields the ``hidden subgroup state''
\begin{align}
	\rho_H=\frac{|H|}{|G|}\sum_{g\in K}\ket{gH}\bra{gH} \label{eq:hidden-subgroup-state}
\end{align}
where $K\subset G$ is a complete set of coset representatives of $H$ and $\ket{gH}\coloneqq |H|^{-1/2}\sum_{h\in H}\ket{gh}$. The question now reduces to asking for the sample complexity of discriminating the uniform ensemble of hidden subgroup states $\rho_H$ given a set of candidate subgroups $H$.

\subsection{Port-based teleportation}
\label{sec:prelim-pbt}
Port-based teleportation \cite{ishizaka2008asymptotic,ishizaka2009quantum} is a variant of standard quantum teleportation \cite{bennett1993teleporting} that operates on a large multipartite quantum state and crucially employs the simplified decoding operation of discarding all but one systems (or \emph{ports}) to achieve teleportation.
This particular decoding step equips port-based teleportation with full unitary covariance, causing the protocol to be necessarily imperfect using finite resources \cite{Nielsen_1997}.
However, the error of the protocol can be made arbitrarily small by increasing the entanglement in the resource state, and the unitary covariance property enables many interesting applications in cryptography \cite{Beigi_2011,dolev2022nonlocalcomputationquantumcircuits,dolev2022holographyresourcenonlocalquantum,May2022complexity}, non-locality \cite{Buhrman_2016}, channel discrimination and simulation \cite{Pirandola_2019,Pereira_2021}, asynchronous quantum information-processing \cite{kim2025resource}, and quantum programming \cite{ishizaka2008asymptotic,Banchi2020convex,Sedl_k_2019,Quintino2019probabilistic,Quintino2022deterministic,yoshida2024,grosshans2024multicopyquantumstateteleportation}.

The quality of port-based teleportation is conveniently measured in terms of the entanglement fidelity of the implemented teleportation channel.
In a general teleportation protocol, its entanglement fidelity is proportional to the success probability in a certain state discrimination problem that involves the uniformly weighted state ensemble obtained from applying the different decoding operations to the entangled resource state \cite{ishizaka2008asymptotic,ishizaka2009quantum,Beigi_2011,chitambar2024teleportation}.
We will analyze port-based teleportation through the lens of state discrimination in \Cref{ex:pbt} in the next section and \Cref{sec:pbt}.

\section{Geometrically uniform ensembles} \label{sec:GU-ensembles}
Let $\rho$ be a quantum state on a Hilbert space $\cH$ and $G$ be a finite group with a unitary representation on $\cH$.
To keep notation simple, we will just write $g$ for the representation matrix of the group element $g\in G$.
A \emph{geometrically uniform} (GU) ensemble $(\frac{1}{n},\rho_g)_{g\in G}$ (where $n=|G|$) consists of states $\rho_g = g\rho g^\dagger$ with uniform prior distribution $p_g=\frac{1}{n}$. 
More generally, let $\{\rho_1,...,\rho_m\}$ be a set of generator states and $G$ a finite group. A \emph{compound GU} (CGU) ensemble is given by $\rho_{g,k}=g\rho_k g^\dagger$ for $k=1,\dots,m$ and $g\in G$, with uniform prior distribution.

The symmetry of GU ensemble simplifies the state discrimination problem in many ways. For instance, there exists an optimal measurement generated by the group, i.e., $\Pi_g=g\Pi g^\dagger$ with generator $\Pi\geq 0$, satisfying $\sum_g \Pi_g = \one$.
In this case the optimal success probability is given by $p^*=\tr(\Pi\rho)$ \cite[Thm.~2]{eldar2004optimal}.  

For the pretty good measurement (PGM) in \eqref{eq:pgm}, using the $G$-invariance of the average state $\overline{\rho} = \frac{1}{n}\sum_{g\in G}\rho_g$,
    \begin{equation}\label{eq: ppgm}
        p_{\mathrm{PGM}}=\sum_{g\in G}\frac{1}{n}\tr\left (\overline{\rho}^{-\frac{1}{2}}\frac{1}{n}\rho_g\overline{\rho}^{-\frac{1}{2}}\rho_g \right)=\frac{1}{n^2}\sum_{g\in G}\tr\bigg((g^\dagger\overline{\rho}^{-\frac{1}{2}}g)\rho (g^\dagger\overline{\rho}^{-\frac{1}{2}}g)\rho\bigg)=\frac{1}{n}\tr\left(\overline{\rho}^{-\frac{1}{2}}\rho\overline{\rho}^{-\frac{1}{2}}\rho\right).
    \end{equation}
In particular, the PGM is always optimal for pure state GU ensembles \cite{eldar2004optimal}. 
We will give a simpler proof of this result by directly computing the success probability of the PGM in Section \ref{sec:pureGU}.

Many instances of state discrimination involve geometrically uniform ensembles, as the following selection of examples demonstrates.
    
\begin{ex}[Port-based teleportation]
\label{ex:pbt}
	An important variant of port-based teleportation (as introduced in \Cref{sec:prelim-pbt}) uses $n$ maximally entangled states $|\phi^+\rangle_{AB}$ as the resource state.
	More precisely, let $A\cong B\cong A_i\cong B_i\cong\mathbb{C}^d$ and consider $\phi_{A^nB^n}=(\phi^+_{AB})^{\otimes n}$, where $\phi^+_{AB} = \frac{1}{d}\sum_{i,j=1}^d \ket{ii}\bra{jj}$ is a maximally entangled state on $\mathbb{C}^d\otimes \mathbb{C}^d$.
    The marginal states $\rho_i\coloneqq\tr_{B_i^c}\phi_{A^nB^n}=\phi^+_{A_iB}\otimes\frac{1}{d^{n-1}}\one_{A_i^c}$ act on systems $A^nB_i$ (here, $A_i^c \equiv A_1\dots A_{i-1}A_{i+1}\dots A_n$, and similarly for $B_i^c$) and form a GU ensemble generated by the symmetric group $S_n$ acting on the systems $A^n$.
    As mentioned in \Cref{sec:prelim-pbt}, the success probability of discriminating this ensemble is proportional to the entanglement fidelity of the port-based teleportation channel with resource state $\phi_{A^nB^n}$.
    Ref.~\cite{leditzky2020} showed that the PGM is optimal in this discrimination problem, and the success probability is evaluated exactly, giving a new and explicit proof of an implicitly proved result in \cite{studzinski2017port}. We will give yet another proof of this result in \Cref{sec:pbt}, avoiding the heavy calculation of trace quantities using representation theory and the GU structure.
\end{ex}
\begin{ex}[Dihedral hidden subgroup problem]
\label{ex:dihedral-hsg}
    In the dihedral hidden subgroup problem \cite{bacon2006dihedral}, the hidden subgroup states \eqref{eq:hidden-subgroup-state} after quantum Fourier transformation are given for $d=0,\dots,N-1$ by
    \begin{align}
    	\rho_d=\frac{1}{N}\bigoplus_{k\in\mathbb{Z}_n}\ket{\tilde{\phi}_{k,d}}\bra{\tilde{\phi}_{k,d}}
    \end{align}
    where $\ket{\tilde{\phi}_{k,d}}=\frac{1}{\sqrt{2}}(\ket{0}+\omega^{kd}\ket{1})\ket{k}$, so each $\rho_d$ is block-diagonal where each block is a GU ensemble of pure states.
    It then follows from \Cref{prop:block-diagonal} that the PGM is an optimal measurement for this ensemble. 
    This can be generalized to any group of the form $A\rtimes\mathbb{Z}_p$ where $A$ is abelian and $p$ is prime \cite{bacon2005semidirect}. We will give a simpler computation of this case in Section \ref{sec:pureGU}.
\end{ex}
\begin{ex}
[Quantum coupon collector problem]
\label{ex:coupon}
    In the \emph{coupon collector problem} with given integers $n$ and $1< k<n$, the goal is to learn an unknown subset $S\subset [n]\coloneqq\lbrace 1,\dots,n\rbrace$ of size $k$.
    In the quantum version studied in \cite{arunachalam2020quantumcoupon,hadiashar2024optimal} one considers $n$-dimensional quantum systems, and the learner is given access to $t$ copies of a superposition over the $k$ elements in the unknown set $S$.
    The goal is again to learn $S$ with high probability, which corresponds to solving the discrimination problem associated with the uniformly weighted quantum state ensemble $(p_S,\phi_S)_{S\subset [n]}$, where $p_S = \binom{n}{k}^{-1}$ for all $S\subset [n]$ and $\phi_S = |\psi_S\rangle\langle\psi_S|^{\otimes t}$ with $|\psi_S\rangle = k^{-1/2}\sum_{s\in S}|s\rangle$.
    Consider the representation of $S_n$ on $\mathbb{C}^n$ via permuting the computational basis, i.e., representing $\pi\in S_n$ with the associated permutation matrix $\varphi_\pi$.
    Fixing some subset $S_0\subset [n]$, we have $\lbrace |\phi_S\rangle \rbrace= \lbrace (\varphi_\pi|\psi_{S_0}\rangle)^{\otimes t} : \pi\in S_n\rbrace$, and thus $(p_S,\phi_S)_{S\subset [n]}$ is a GU ensemble of size $n!/k!$ generated by $S_n$ (note that $|\psi_{S_0}\rangle$ is invariant under permuting basis vectors corresponding to elements in $S_0$).
\end{ex}
By twirling we can simplify the dual SDP \eqref{eq:dual} for discriminating GU ensembles. We will be using the following lemma throughout the paper:
\begin{lem}\label{lemma: GU SDP}
    We have
    \begin{align}
    	\psucc^*=\min_{\tilde{K}}\tr(\tilde{K}),
    \end{align}
    where the minimization is taken over all $G$-invariant operators $\tilde{K}$ satisfying $\tilde{K}\geq\frac{1}{n}\rho$.
\end{lem}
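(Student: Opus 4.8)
The plan is to establish the two inequalities $\min_{\tilde{K}}\tr(\tilde{K}) \le \psucc^*$ and $\psucc^* \le \min_{\tilde{K}}\tr(\tilde{K})$ by a symmetrization (``twirling'') argument built on the $G$-invariance of the ensemble. Throughout I write $\rho_g = g\rho g^\dagger$ and $n=|G|$, and for any operator $K$ I define its twirl
\begin{align}
    \tilde{K} \coloneqq \frac{1}{n}\sum_{g\in G} g^\dagger K g .
\end{align}
Two properties are immediate and will be used without further comment. First, $\tilde{K}$ is $G$-invariant, i.e. $h\tilde{K} h^\dagger = \tilde{K}$ for all $h\in G$; this follows by reindexing the sum via the bijection $g\mapsto gh^{-1}$ of $G$. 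Second, since each $g$ is unitary, conjugation by $g$ preserves positive semidefiniteness, and cyclicity of the trace gives $\tr(\tilde{K})=\tr(K)$.

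For the first inequality I would start from an optimal $K$ in the dual program \eqref{eq:dual}, whose GU constraints read $K \ge \frac{1}{n}\rho_g = \frac{1}{n} g\rho g^\dagger$ for every $g\in G$. Conjugating the $g$-th constraint by $g^\dagger$ turns it into $g^\dagger K g \ge \frac{1}{n}\rho$, and averaging over $g\in G$ produces precisely $\tilde{K} \ge \frac{1}{n}\rho$. Thus $\tilde{K}$ is a $G$-invariant feasible point of the restricted program with $\tr(\tilde{K})=\tr(K)=\psucc^*$, which gives $\min_{\tilde{K}}\tr(\tilde{K}) \le \psucc^*$.

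For the reverse inequality I would show that every $G$-invariant $\tilde{K}$ with $\tilde{K} \ge \frac{1}{n}\rho$ is already feasible for the full dual \eqref{eq:dual}. Indeed, conjugating $\tilde{K} \ge \frac{1}{n}\rho$ by $g$ yields $g\tilde{K} g^\dagger \ge \frac{1}{n} g\rho g^\dagger = \frac{1}{n}\rho_g$, and $G$-invariance collapses the left-hand side to $\tilde{K}$, so $\tilde{K} \ge \frac{1}{n}\rho_g$ holds for all $g$. Hence such a $\tilde{K}$ is admissible in \eqref{eq:dual}, giving $\psucc^* \le \tr(\tilde{K})$ and therefore $\psucc^* \le \min_{\tilde{K}}\tr(\tilde{K})$. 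Combining the two bounds proves the lemma.

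I do not anticipate any real obstacle here; the argument is a routine convexity/symmetry reduction, and the only points demanding care are tracking the direction of conjugation so that the $g$-dependent right-hand side $\frac{1}{n}\rho_g$ collapses to the single constraint $\frac{1}{n}\rho$, and observing that the averaged conjugates assemble exactly into the twirl because $g\mapsto g^{-1}$ is a bijection of $G$.
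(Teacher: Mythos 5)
Your proof is correct and follows essentially the same route as the paper's: one direction shows any $G$-invariant feasible $\tilde{K}$ satisfies all constraints $\tilde{K}\geq\frac{1}{n}\rho_g$ by conjugation, and the other twirls an optimal dual solution $K$ into a $G$-invariant feasible point with the same trace. The only cosmetic difference is that you define the twirl as $\frac{1}{n}\sum_g g^\dagger K g$ while the paper uses $\frac{1}{n}\sum_g g K g^\dagger$, which are identical after reindexing $g\mapsto g^{-1}$.
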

\begin{proof}
    First, if $\tilde{K}$ is $G$-invariant and $\tilde{K}\geq\frac{1}{n}\rho$, then $\tilde{K}\geq\frac{1}{n}\rho_g$ for all $g\in G$.
    Thus, $\tilde{K}$ is feasible in \eqref{eq:dual}, and hence
    \begin{align}
    	\psucc^*\leq\min_{\tilde{K}}\tr(\tilde{K}).
    \end{align} 
	On the other hand, let $K$ be an optimal operator that achieves the minimum, and let $\tilde{K}\coloneqq\frac{1}{n}\sum_{g\in G}gKg^\dagger$.
	Then, for $g\in G$,
    \begin{align}
    	\tilde{K}=\frac{1}{n}\sum_{g\in G}gKg^\dagger\geq\frac{1}{n}\sum_{g\in G}g(\frac{1}{n}\rho_{g^{-1}})g^\dagger=\frac{1}{n}\rho
    \end{align}
    and
    \begin{align}
    	\tr(\tilde{K})=\frac{1}{n}\sum_{g\in G}\tr(gKg^\dagger)=\tr(K).
    \end{align}
    Therefore,
    \begin{align}
    	\psucc^*=\tr(K)=\tr(\tilde{K})\geq\min_{\tilde{K}}\tr(\tilde{K}),
    \end{align}
	which concludes the proof.
\end{proof}

\section{Irreducible representations}
\label{sec:irreducible}

In this section, we assume that the generating group $G$ of the GU ensemble is a finite group of order $n=|G|$ that acts irreducibly on the Hilbert space $\cH$.
We will often make use of the identity
\begin{align}
	\sum_{g\in G} g X g^\dagger = \frac{n}{d}\tr(X) \one, \label{eq:schurs-lemma}
\end{align}
which follows from the irreducibility of the action of $G$ on $\cH$ and Schur's Lemma.

\subsection{Success probability and optimal measurement in GU state discrimination} \label{sec:optimal-success-probability}
An explicit formula for the success probability of discriminating a GU ensemble generated by an irreducible representation of a finite group was originally derived by Holevo \cite{holevo1973statistical}.
Here, we give a proof of this result in our framework.
\begin{prop}[\cite{holevo1973statistical}]\label{prop:geom uniform optimal}
    Let $G$ be a finite group ($|G|=n)$ with an irreducible representation $(U_g)_{g\in G}$ on a $d$-dimensional Hilbert space $\cH$, which we denote for simplicity by $g\equiv U_g$.
    Consider the GU ensemble $\lbrace \rho_g\rbrace_{g\in G}$ with $\rho_g = g\rho g^\dagger$ for a generator state $\rho$.
    Then the optimal success probability is equal to
    \begin{align}
        \psucc^* = \frac{d}{n} \lmax(\rho). \label{eq:irrep-optimal-psucc}
    \end{align}
    An optimal measurement can be constructed as follows: Let $d^{\max}$ denote the dimension of the eigenspace of $\rho$ corresponding to its largest eigenvalue, and choose $1\leq k\leq d^{\max}$ orthonormal vectors $\lbrace |\phi_1\rangle,\dots,|\phi_k\rangle\rbrace$ in this eigenspace.
    Setting
    \begin{align}
        \Pi = \frac{d}{nk} \sum_{i=1}^k |\phi_i\rangle\langle \phi_i|, \label{eq:irrep-optimal-measurement}
    \end{align}
    the measurement $\lbrace \Pi_g\rbrace_{g\in G}$ with $\Pi_g = g\Pi g^\dagger$ is optimal and achieves \eqref{eq:irrep-optimal-psucc}.
\end{prop}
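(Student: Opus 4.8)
The plan is to determine the optimal value using the simplified dual from \Cref{lemma: GU SDP}, and then separately verify that the covariant measurement $\lbrace\Pi_g\rbrace$ proposed in \eqref{eq:irrep-optimal-measurement} is feasible and attains that value; together these two halves establish both assertions of the proposition.

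To compute the value, I would invoke \Cref{lemma: GU SDP}, which writes $\psucc^*$ as the minimum of $\tr(\tilde{K})$ over all $G$-invariant operators $\tilde{K}$ satisfying $\tilde{K}\geq\frac{1}{n}\rho$. The key observation is that $G$-invariance together with irreducibility is extremely restrictive: a $G$-invariant operator commutes with every representation matrix $g$, so Schur's lemma forces $\tilde{K}=c\,\one$ for some scalar $c\geq 0$. The constraint $c\,\one\geq\frac{1}{n}\rho$ is then equivalent to $c\geq\frac{1}{n}\lmax(\rho)$, and since $\tr(\tilde{K})=cd$ the minimum is attained at $c=\frac{1}{n}\lmax(\rho)$. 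This yields $\psucc^*=\frac{d}{n}\lmax(\rho)$ and hence \eqref{eq:irrep-optimal-psucc} immediately.

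It then remains to check that $\lbrace\Pi_g\rbrace$ is an optimal POVM. First I would confirm it is a valid measurement: positivity of each $\Pi_g=g\Pi g^\dagger$ is immediate because $\Pi$ is a nonnegative multiple of a projector, and completeness follows from the Schur identity \eqref{eq:schurs-lemma}, since $\tr(\Pi)=\frac{d}{nk}\cdot k=\frac{d}{n}$ gives $\sum_{g\in G}\Pi_g=\frac{n}{d}\tr(\Pi)\,\one=\one$. Next I would compute the attained success probability. By covariance, each term satisfies $\tr(\Pi_g\rho_g)=\tr(g\Pi g^\dagger g\rho g^\dagger)=\tr(\Pi\rho)$, so $\psucc=\frac{1}{n}\sum_{g\in G}\tr(\Pi_g\rho_g)=\tr(\Pi\rho)$. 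Finally, because each $|\phi_i\rangle$ lies in the eigenspace of the largest eigenvalue, $\langle\phi_i|\rho|\phi_i\rangle=\lmax(\rho)$, so $\tr(\Pi\rho)=\frac{d}{nk}\cdot k\,\lmax(\rho)=\frac{d}{n}\lmax(\rho)$, matching the optimal value found above.

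I do not anticipate a genuine obstacle; the argument is essentially a clean application of Schur's lemma in two complementary roles (collapsing the $G$-invariant operators to scalars, and summing the covariant projectors to the identity). The one conceptual point to get right is that \Cref{lemma: GU SDP} already delivers the \emph{exact} value once the feasible set collapses to scalar operators, so the dual side is not merely a bound but the answer itself, and the explicit measurement serves only to witness attainability. The sole piece of bookkeeping to watch is the normalization $\frac{d}{nk}$ in $\Pi$, which is precisely what makes both the completeness relation and the attained probability come out correctly for any admissible choice of $k\leq d^{\max}$.
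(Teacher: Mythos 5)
Your proof is correct. The achievability half — checking that $\lbrace \Pi_g\rbrace_{g\in G}$ is a valid POVM via \eqref{eq:schurs-lemma} and computing $\tr(\Pi\rho)=\frac{d}{n}\lmax(\rho)$ — is exactly the paper's argument. The optimality half, however, takes a genuinely different route. The paper does not invoke \Cref{lemma: GU SDP} here: it writes down the explicit operator $K=\frac{1}{n}\sum_{g\in G} g\rho^{1/2}\Pi\rho^{1/2}g^\dagger$, evaluates it via \eqref{eq:schurs-lemma} to $\frac{\lmax(\rho)}{n}\one$, checks $K\geq\frac{1}{n}\rho_g$ directly (each $\rho_g$ has the same spectrum as $\rho$), and concludes by a weak-duality sandwich between the POVM value and $\tr(K)$. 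You instead work in the twirled dual of \Cref{lemma: GU SDP} and apply Schur's lemma in its commutant form: every $G$-invariant operator is a scalar $c\one$, feasibility forces $c\geq\frac{1}{n}\lmax(\rho)$, and minimizing $\tr(c\one)=cd$ gives the value. Both arguments ultimately produce the same certificate $\frac{\lmax(\rho)}{n}\one$, but they buy different things. Your version solves the restricted dual exactly, so the value falls out of the dual alone (the measurement is purely a witness of attainment), and it shows the scalar is the \emph{only} admissible $G$-invariant dual solution — the pattern that recurs in the reducible, block-diagonal settings of \Cref{sec:generators-symmetry}. The paper's version is logically lighter: it needs only weak duality, whereas \Cref{lemma: GU SDP}, on which you rely, is proved using strong duality and attainment of the dual optimum. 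Both are complete and correct proofs.
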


\begin{proof}
    Definition \eqref{eq:irrep-optimal-measurement} of $\Pi$ and \eqref{eq:schurs-lemma} immediately show that $\lbrace \Pi_g\rbrace_{g\in G}$ is a valid POVM with success probability given by \eqref{eq:irrep-optimal-psucc}. 
    To show optimality, we define an operator 
    \begin{align}
        K\coloneqq \frac{1}{n} \sum_{g\in G} g \rho^{1/2} \Pi \rho^{1/2} g^\dagger = \frac{1}{d} \tr\left(\rho^{1/2}\Pi\rho^{1/2}\right) \one = \frac{1}{kn} \sum_{i=1}^k \tr(\rho |\phi_i\rangle\langle\phi_i|)\one = \frac{\lmax(\rho)}{n} \one,
    \end{align}
    where we once again used \eqref{eq:schurs-lemma} in the second equality.
    Since all $\rho_g$ are unitarily equivalent and thus have the same spectrum, we obtain that $K\geq \frac{1}{n}\rho_g$ for all $g\in G$, and thus $K$ is dual feasible in \eqref{eq:dual}, and its trace is equal to $\psucc^*$ in \eqref{eq:irrep-optimal-psucc}.
\end{proof}

\begin{prop}
\label{prop:irrep-gu}
    With the assumptions of \Cref{prop:geom uniform optimal}, the power-PGM defined in \eqref{eq:power-PGM} gives an optimal measurement in the limit $\alpha\to\infty$.
\end{prop}
\begin{proof}
    To see this, let first $\lambda\equiv \lmax(\rho)$, and denote by $\Pi_\rho^{\max}$ the projector onto the eigenspace of $\rho$ corresponding to $\lambda$.
    Observe that $\tilde{\rho} \coloneqq \frac{1}{\lambda}\rho$ satisfies $\lim_{\alpha\to \infty} \tilde{\rho}^\alpha = \Pi_\rho^{\max}$, since the rescaled eigenvalues in the other eigenspaces are strictly less than $1$ and therefore decay in the limit $\alpha\to\infty$. 
	Moreover, $\lambda = \lmax(\rho_g)$ for all $g\in G$.
	Setting $\Pi_g^{\max} = g \Pi_\rho^{\max}g^\dagger$, we then have
\begin{align}
    \lim_{\alpha \to \infty} M_g^{(\alpha)} & = 
    \lim_{\alpha \to \infty} \left( \sum\nolimits_{h\in G} \left(\frac{\rho_h}{n}\right)^\alpha \right)^{-\frac{1}{2}} \left(\frac{\rho_g}{n}\right)^\alpha \left( \sum\nolimits_{h\in G} \left(\frac{\rho_h}{n}\right)^\alpha \right)^{-\frac{1}{2}} \\
    &= \lim_{\alpha \to \infty} \left( \sum\nolimits_{h\in G} \rho_h^\alpha \right)^{-\frac{1}{2}} \rho_g^\alpha \left( \sum\nolimits_{h\in G} \rho_h^\alpha \right)^{-\frac{1}{2}} \\
    &= \lim_{\alpha \to \infty} \left( \sum\nolimits_{h\in G} h \tilde{\rho}^\alpha h^\dagger \right)^{-\frac{1}{2}} \tilde{\rho}_g^\alpha \left( \sum\nolimits_{h\in G} h \tilde{\rho}^\alpha h^\dagger \right)^{-\frac{1}{2}}\\
    & = \left( \sum\nolimits_{h\in G} h \Pi_\rho^{\max} h^\dagger \right)^{-\frac{1}{2}}  \Pi_g^{\max} \left( \sum\nolimits_{h\in G} h \Pi_\rho^{\max} h^\dagger \right)^{-\frac{1}{2}} \\
    &= 
    \left( \frac{nd^{\max}}{d} \right)^{-\frac{1}{2}} \Pi_g^{\max} \left( \frac{nd^{\max}}{d} \right)^{-\frac{1}{2}}\\
    &= \frac{d}{nd^{\max}}\Pi_g^{\max},
\end{align}
which concludes the proof.
\end{proof}

\subsection{Suboptimality of $\alpha$-PPGM for generic $\alpha$}\label{sec:suboptimality-of-PGM}
For a generic $\alpha$, the probability of success is equal to
\begin{align}
	\psucc =  \frac{1}{n}\sum_g \tr \left[ \rho_g M_g^{(\alpha)} \right]\; .
\end{align}

Assuming the irreducibility of the representation $G$ and using \eqref{eq:schurs-lemma}, we have that 
\begin{equation}
	M_g^{(\alpha)} = \left(\frac{\tr [\rho^{\alpha}]}{n^{\alpha-1} d}\one \right)^{-\frac{1}{2}} \left(\frac{\rho}{n}\right)^\alpha \left(\frac{\tr [\rho^{\alpha}]}{n^{\alpha-1} d}\one \right)^{-\frac{1}{2}} \; .
\end{equation}
Substituting above gives
\begin{equation}
	\psucc(\alpha) = \frac{n^{\alpha-1}}{n^{\alpha+1}} \frac{d}{\tr [\rho^{\alpha}] } \sum_g \tr \left[ \rho_g \rho_g^{\alpha} \right]
	= \frac{n^{\alpha-1}}{n^{\alpha}} \frac{d}{\tr [\rho^{\alpha}] }  \tr \left[ \frac{1}{n} \sum_g g \rho_g^{\alpha+1} g^\dagger \right]\; .
\end{equation}
Exploiting again irreducibility via \eqref{eq:schurs-lemma}, it follows that
\begin{equation}
	\psucc(\alpha) = \frac{d}{n}\frac{\tr [\rho^{\alpha+1}]}{\tr [\rho^{\alpha}]} \; .\label{eq:psucc-alpha}
\end{equation}
If the generator is pure then $\tr [\rho^{\alpha+1}] = \tr [\rho^{\alpha}]$ and we recover the optimality of the PGM already showed by Eldar \cite{eldar2004optimal}. If the generator is mixed, then the optimal probability reaches the maximum $ \psucc^* = d \lmax(\rho)/n$ (as proved in \Cref{prop:geom uniform optimal}) only in the limit $\alpha \to \infty$, that is, $\lim_{\alpha\to \infty}\psucc(\alpha) = \psucc^*$, which is easy to see from \eqref{eq:psucc-alpha}, and $\psucc(\alpha)<\psucc^*$ for all finite $\alpha$. This shows that for mixed generators the $\alpha$-PPGMs with finite $\alpha < \infty$ are strictly suboptimal.

\subsection{State exclusion problem}
\label{sec:exclusion}
We can generalize \Cref{prop:geom uniform optimal} to the \emph{state exclusion problem} \cite{pusey2012reality,bandyopadhyay2014exclusion,mcirvin2024prettybad}. In this problem, one aims to rule out the preparation of a specific state in an ensemble, which amounts to a minimization in the corresponding optimization problem for the error probability:
\begin{align}
	q^*=\min_{\Pi}\sum_i p_i\tr(\Pi_i\rho_i).
\end{align}
The dual problem is
\begin{align}
	q_*=\max \lbrace \tr(K) : K\leq p_i\rho_i \text{ for all }i\rbrace.
\end{align}
Using the twirling technique from above, for a GU ensemble we again have
\begin{align}
	q_*=\max_{\tilde{K}}\tr(\tilde{K})
\end{align}
where $\tilde{K}$ is taken over all $G$-equivariant operators such that $\tilde{K}\leq\frac{1}{n}\rho$. In particular, if $G$ acts irreducibly on the space $\cH$, then Schur's Lemma implies that 
\begin{align}q_*=\frac{d\lambda_{\min}(\rho)}{n}.\end{align}
In particular, $q_*>0$ if and only if $\rho$ has full rank. A choice of the optimal measurement is generated by a projection whose support is any subset of the minimal eigenspace of $\rho$.
Further studies of the state exclusion problem using representation-theoretic methods have recently been carried out in \cite{diebra2025exclusion,yao2025conclusive}.

\section{Generator with symmetry}
\label{sec:generators-symmetry}

In this section, we consider two examples where the representation is not irreducible, but the generator has a particular symmetry that reduces the problem into the irreducible case. Before stating the result, we will first review Schur-Weyl duality and quantum $k$-designs.

\subsection{Schur-Weyl Duality}
Consider the action of $S_n\times \cU_d$ on $(\mathbb{C}^d)^{\otimes n}$ where 
\begin{align}
	P_\pi(v_1\otimes...\otimes v_n)=v_{\pi^{-1}(1)}\otimes...\otimes v_{\pi^{-1}(n)}\quad\forall\:\pi\in S_n \label{eq:Sn-action}
\end{align}
and
\begin{align}
	U.(v_1\otimes...\otimes v_n)=U^{\otimes n}(v_1\otimes...\otimes v_n)=(Uv_1)\otimes...\otimes(Uv_n)\quad\forall\:U\in \cU_d. \label{eq:Ud-action}
\end{align}
Schur-Weyl decomposition states that the representation space $(\mathbb{C}^d)^{\otimes n}$ can be decomposed as follows \cite{fulton2013representation}:
\begin{align}
	(\mathbb{C}^d)^{\otimes n} \cong \bigoplus_{\lambda \vdash_d n} \cV_\lambda^d \otimes \cW_\lambda, \label{eq:schur-weyl-decomposition}
\end{align}
where $\lambda\vdash_d n$ denotes an unordered partition of $n$ into $d$ parts, $\cV_\lambda^d$ is an irreducible representation space of the unitary group $\cU_d$ with dimension $m_{d,\lambda}$, and $\cW_\lambda$ is an irreducible representation space of the symmetric group $S_n$ with dimension $d_\lambda$. In this basis, the representations \eqref{eq:Sn-action} and \eqref{eq:Ud-action} have the form
\begin{align} 
	P_\pi &\cong \bigoplus_{\lambda \vdash_d n} \one_{\cV_\lambda^d} \otimes p_\lambda(\pi) & \quad U^{\otimes n} &\cong \bigoplus_{\lambda \vdash_d n} q_\lambda(U) \otimes \one_{\cW_\lambda}, \label{eq:schur-weyl-action}
\end{align}
where $p_\lambda(\pi)$ and $q_\lambda(U)$ are the corresponding irreps of $S_n$ and $\cU_d$, respectively.
By Schur's lemma, any $S_n$-invariant operator can be written as 
\begin{align}X=\bigoplus_{\lambda\vdash_d n}X_\lambda\otimes\one_{\cW_\lambda},\end{align}
and similarly any $U^{\otimes n}$-invariant operator can be written as
\begin{align}X=\bigoplus_{\lambda\vdash_d n}\one_{\cV_\lambda^d}\otimes X_\lambda.\end{align}
In particular, any $(S_n\times\cU_d)$-invariant operator can be written as
\begin{align}X=\bigoplus_{\lambda\vdash_d n}c_\lambda\one_{\cV_\lambda^d}\otimes\one_{\cW_\lambda}\end{align}
where $c_\lambda\in\mathbb{C}$.

\subsection{Quantum designs} 
\begin{defn}
    A finite subset $X\subset\cU_d$ is called a $k$-design if 
    \begin{align}\frac{1}{|X|}\sum_{V\in X}V^{\otimes k}F(V^\dagger)^{\otimes k}=\int_{\cU_d}U^{\otimes k}F(U^\dagger)^{\otimes k}dU\quad\forall\:F\in\cB((\mathbb{C}^d)^{\otimes k}).\end{align}
\end{defn}
\noindent In particular, designs with group structure (called group designs) are of particular interest. It has been shown that the Pauli group is a 1-design but not a 2-design, and the Clifford group is a 3-design but not a 4-design \cite{webb2016design}. 
Most discussions about group designs use analytic methods; here, we present a representation-theoretic perspective that has been adopted in previous works such as \cite{gross2007evenly,kaposi2024generalizedgroupdesignsovercoming}:
\begin{prop}\label{Proposition: n design irrep}
    A group $G$ is an $n$-design on $\mathbb{C}^d\Longrightarrow G$ acts irreducibly on each $\cV_\lambda^d$ in \eqref{eq:schur-weyl-decomposition} for each $\lambda\vdash_d n$.
\end{prop}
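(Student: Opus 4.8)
The plan is to recast the design property as the equality of two completely positive maps---the $G$-twirl and the Haar twirl---and then read off the condition on each $\cV_\lambda^d$ from Schur's lemma. Concretely, I would define the twirls $\cT_G(F)=\frac{1}{|G|}\sum_{g\in G}g^{\otimes n}F(g^\dagger)^{\otimes n}$ and $\cT_{\cU_d}(F)=\int_{\cU_d}U^{\otimes n}F(U^\dagger)^{\otimes n}\,dU$. Both maps are idempotent and self-adjoint with respect to the Hilbert--Schmidt inner product (self-adjointness uses that $G$ is a group, so $g\mapsto g^{-1}$ is a bijection of the summation), hence both are orthogonal projections. Their ranges are the commutants $\mc{A}_G=\{g^{\otimes n}:g\in G\}'$ and $\mc{A}_{\cU_d}=\{U^{\otimes n}:U\in\cU_d\}'$, and since $G\subseteq\cU_d$ we always have the inclusion $\mc{A}_{\cU_d}\subseteq\mc{A}_G$. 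By definition $G$ is an $n$-design exactly when $\cT_G=\cT_{\cU_d}$; since two orthogonal projections coincide iff their ranges do, and the inclusion of commutants is automatic, the whole statement reduces to the single condition $\mc{A}_G=\mc{A}_{\cU_d}$.

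Next I would make both commutants explicit via the Schur--Weyl decomposition \eqref{eq:schur-weyl-decomposition}. On the $\cU_d$ side, \eqref{eq:schur-weyl-action} gives $\mc{A}_{\cU_d}=\bigoplus_{\lambda\vdash_d n}\one_{\cV_\lambda^d}\otimes\cB(\cW_\lambda)$, of dimension $\sum_\lambda d_\lambda^2$. On the $G$ side, restricting the action yields $g^{\otimes n}\cong\bigoplus_\lambda (q_\lambda|_G)\otimes\one_{\cW_\lambda}$, where $q_\lambda|_G$ is the restriction to $G$ of the $\cU_d$-irrep acting on $\cV_\lambda^d$. An operator lies in $\mc{A}_G$ iff, blockwise, it intertwines these restricted representations, and Schur's lemma controls each block. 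The easy direction is ``design $\Rightarrow$ irreducible'': if $\mc{A}_G=\mc{A}_{\cU_d}$, then the only block-diagonal elements are of the form $\one_{\cV_\lambda^d}\otimes(\cdot)$, so a reducible $q_\lambda|_G$ would produce a non-scalar operator on $\cV_\lambda^d$ commuting with $g^{\otimes n}$, a contradiction; hence every $q_\lambda|_G$ must be irreducible.

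The step I expect to be the main obstacle is the converse, ``irreducible $\Rightarrow$ design,'' because of the off-diagonal blocks: an element of $\mc{A}_G$ may a priori intertwine $\cV_{\lambda'}^d\otimes\cW_{\lambda'}$ with $\cV_\lambda^d\otimes\cW_\lambda$ for $\lambda\neq\lambda'$, whereas $\mc{A}_{\cU_d}$ has none. By Schur's lemma such a component can be nonzero only when $q_\lambda|_G\cong q_{\lambda'}|_G$, so $\mc{A}_G=\mc{A}_{\cU_d}$ requires, beyond irreducibility, that the restricted irreps be pairwise inequivalent. I would pin this down by a dimension count: writing $b_\mu$ for the total multiplicity of each $G$-irrep $\sigma_\mu$ in $g^{\otimes n}$, one has $\dim\mc{A}_G=\sum_\mu b_\mu^2=\sum_{\lambda,\lambda'}d_\lambda d_{\lambda'}\langle q_\lambda|_G,q_{\lambda'}|_G\rangle_G\geq\sum_\lambda d_\lambda^2=\dim\mc{A}_{\cU_d}$, where $\langle\,\cdot\,,\cdot\,\rangle_G$ is the inner product of restricted characters and the inequality uses that restriction only increases intertwiner counts, $\langle q_\lambda|_G,q_{\lambda'}|_G\rangle_G\geq\langle q_\lambda,q_{\lambda'}\rangle_{\cU_d}=\delta_{\lambda\lambda'}$. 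Equality throughout forces both irreducibility ($\langle q_\lambda|_G,q_\lambda|_G\rangle_G=1$) and inequivalence ($\langle q_\lambda|_G,q_{\lambda'}|_G\rangle_G=0$ for $\lambda\neq\lambda'$). The genuinely delicate point, and where I would focus the real work, is then to justify that for the particular representations $q_\lambda$ arising inside $(\mbb{C}^d)^{\otimes n}$, irreducibility of all the restrictions already entails their pairwise inequivalence, so that the clean statement phrased in terms of irreducibility alone is vindicated.
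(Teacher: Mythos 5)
Your reduction of the design property to the equality of commutants $\mc{A}_G=\mc{A}_{\cU_d}$ is correct (both twirls are indeed orthogonal projections onto the respective commutants, and $\mc{A}_{\cU_d}\subseteq\mc{A}_G$ is automatic), and your forward direction is sound. But as a proof of the proposition your proposal is incomplete, and you say so yourself: the converse direction is left resting on the unproven claim that irreducibility of every restriction $q_\lambda|_G$, $\lambda\vdash_d n$, automatically forces their pairwise inequivalence. Your dimension count establishes precisely that $\mc{A}_G=\mc{A}_{\cU_d}$ is equivalent to $\langle q_\lambda|_G,q_{\lambda'}|_G\rangle_G=\delta_{\lambda\lambda'}$ for all pairs, i.e.\ to irreducibility \emph{together with} inequivalence. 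Nothing in your argument excludes the scenario of two distinct partitions $\lambda\neq\lambda'$ whose restrictions to $G$ are irreducible but isomorphic; this requires $\dim\cV_\lambda^d=\dim\cV_{\lambda'}^d$, and such dimension coincidences do occur (e.g.\ $(4,1,1)$ and $(3,3)$ for $d=3$, $n=6$, both of dimension $10$, or $(3)$ and $(2,1)$ for $d=4$, $n=3$, both of dimension $20$). In that scenario the commutant $\mc{A}_G$ acquires off-diagonal intertwiners, $G$ is not an $n$-design, and yet the irreducibility hypothesis of the proposition holds --- so either this configuration must be ruled out by an actual argument, or the statement needs the inequivalence condition added. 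The gap is therefore genuine: what you have proved is ``$G$ is an $n$-design $\Longleftrightarrow$ all restrictions are irreducible \emph{and} pairwise inequivalent,'' which is not yet the statement as written.

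It is worth saying, however, that the obstacle you isolated is real and is not dealt with in the paper's own proof either. In the ($\Leftarrow$) direction the paper forms $T=\frac{1}{|G|}\sum_{g}g^{\otimes n}X(g^\dagger)^{\otimes n}$ and asserts that $T$, being $G$-invariant, is ``hence also $U^{\otimes n}$-invariant because $G$ acts irreducibly on all Weyl modules.'' That assertion is exactly the inclusion $\mc{A}_G\subseteq\mc{A}_{\cU_d}$, and Schur's lemma delivers it only blockwise on the diagonal; possible intertwiners between equivalent restrictions $q_\lambda|_G\cong q_{\lambda'}|_G$ with $\lambda\neq\lambda'$ are silently ignored. (The paper's ($\Rightarrow$) direction, like yours, is fine, and it is the only direction actually used later, in \Cref{prop:perm-inv-generator}.) So your proposal should not be counted as a complete proof --- its keystone is missing --- but your diagnosis identifies a step that the paper's argument also glosses over; a complete treatment must either supply the missing inequivalence argument for Weyl-module restrictions or restate the proposition with inequivalence as part of the right-hand side.
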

\begin{proof}
    According to \eqref{eq:schur-weyl-action} we have $g^{\otimes n}\cong \bigoplus_{\lambda} \phi_\lambda(g)\otimes \one_{\cW_\lambda}$ for some representations $\phi_\lambda$. Assume that there exists $\lambda\vdash_d n$ such that $\phi_\lambda$ is reducible, that is, there exists a non-trivial strict subspace $\lbrace 0\rbrace \lneqq \cT_\lambda\lneqq \cV_\lambda^d$ such that $\phi_\lambda(G)\cT_\lambda \subseteq \cT_\lambda$.
    Choose non-zero vectors $|v_\lambda\rangle\in\cT_\lambda$ and $|v_\lambda^\perp\rangle\in\cT_\lambda^\perp$.
    By assumption, these vectors satisfy $\langle v_\lambda^\perp|\phi_\lambda(g)|v_\lambda\rangle = 0$ for all $g\in G$.
    Choosing another non-zero vector $|w_\lambda\rangle\in\cW_\lambda$ and setting $X_\lambda = |v_\lambda\rangle\langle v_\lambda|\otimes |w_\lambda\rangle\langle w_\lambda|$, we have
    \begin{align}
    	0 = \left(\frac{1}{|G|} \sum\nolimits_{g\in G} g^{\otimes n} X_\lambda (g^\dagger)^{\otimes n}\right) |v_\lambda^\perp\rangle \otimes |w_\lambda\rangle = \left(\int_{\cU_d} U^{\otimes n} X_\lambda (U^\dagger)^{\otimes n} dU\right) |v_\lambda^\perp\rangle \otimes |w_\lambda\rangle = c_\lambda |v_\lambda^\perp\rangle \otimes |w_\lambda\rangle
    \end{align}
	for some $c_\lambda\neq 0$, which is a contradiction.
	Hence, $\phi_\lambda$ is irreducible for each $\lambda\vdash_d n$.
\end{proof}

\subsection{Werner generator state}\label{sec:werner-generator-state}
Let $\rho_n$ be a $U^{\otimes n}$-invariant state for all $U\in\cU_d$, then
\begin{align}
	\rho_n = \bigoplus_{\lambda\vdash_d n} \one_{V_\lambda} \otimes \rho_\lambda,
\end{align}
for some positive operators $\rho_\lambda$ on $W_\lambda$.

We define a geometrically uniform ensemble $(\rho^{\pi})_{\pi\in S_n}$ by
\begin{align}\rho^{\pi} = P_\pi \rho_n P_\pi^\dagger \cong \bigoplus_{\lambda \vdash_d n} \one_{V_\lambda} \otimes \rho_{\pi,\lambda}\end{align}
 where \begin{align}\rho_{\pi,\lambda}=p_\lambda(\pi) \rho_\lambda p_\lambda(\pi)^\dagger.\end{align}
 Note that $p_\lambda$ is an irreducible representation of $S_n$ on $W_\lambda$ for each $\lambda$, so \Cref{prop:geom uniform optimal} can be applied to each $W_\lambda$ individually. So we have proved:

\begin{prop}
\label{prop:werner-generator}
    Let $\rho_n=\bigoplus_\lambda c_\lambda\one_{V_\lambda} \otimes \rho_\lambda$ be a $U^{\otimes n}$-invariant state and $\rho^\pi\coloneqq P_\pi\rho_n P_\pi^\dagger$ form a geometrically uniform ensemble $(\rho^{\pi})_{\pi\in S_n}$. Then the optimal success probability of state discrimination is
    \begin{align}\psucc^*=\frac{1}{n!}\sum_\lambda d_\lambda m_{d,\lambda}\lmax(\rho_\lambda).\end{align}
    An optimal choice of POVM is given by
    \begin{align}M_\pi=P_\pi\bigg(\bigoplus_\lambda\frac{m_{d,\lambda}}{d^{\max}_\lambda}\one_{V_\lambda}\otimes\Pi^{\max}_{\rho_\lambda}\bigg)P_\pi^\dagger\end{align}
    where $d_\lambda^{\max}$ is the dimension of the maximal eigenspace of $\rho_\lambda$.
\end{prop}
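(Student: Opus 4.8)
The plan is to observe that the ensemble $(\rho^\pi)_{\pi\in S_n}$ is block-diagonal in the Schur--Weyl decomposition and to reduce the problem to the irreducible case already solved in \Cref{prop:geom uniform optimal}, gluing the blocks together with \Cref{prop:block-diagonal}. First I would record that $U^{\otimes n}$-invariance of $\rho_n$ together with \eqref{eq:schur-weyl-action} forces the block form $\rho_n=\bigoplus_\lambda \one_{V_\lambda}\otimes\rho_\lambda$, and that $P_\pi$ acts blockwise as $\one_{V_\lambda}\otimes p_\lambda(\pi)$. Hence $\rho^\pi=\bigoplus_\lambda \one_{V_\lambda}\otimes\rho_{\pi,\lambda}$ with $\rho_{\pi,\lambda}=p_\lambda(\pi)\rho_\lambda p_\lambda(\pi)^\dagger$, so the factor $\one_{V_\lambda}$ on the $\cU_d$-isotypic space is \emph{independent of} $\pi$ (the redundant part), while all the $\pi$-dependence sits in $\rho_{\pi,\lambda}$ on the $S_n$-irrep $W_\lambda$.

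Next I would cast this into the hypothesis of \Cref{prop:block-diagonal}, taking care that the roles of the two tensor factors are swapped relative to that statement: here the discrimination happens on $W_\lambda$ and the redundant state is $\omega_\lambda=\one_{V_\lambda}/m_{d,\lambda}$ on $V_\lambda$. Normalizing, the $\lambda$-block of $\rho^\pi$ has trace $q_\lambda=m_{d,\lambda}\tr(\rho_\lambda)$, independent of $\pi$ since conjugation preserves the trace, so the block weights are $s_\lambda=\sum_\pi \tfrac{1}{n!}q_\lambda=m_{d,\lambda}\tr(\rho_\lambda)$ and the conditional priors remain uniform, $p_{\pi,\lambda}=1/n!$. \Cref{prop:block-diagonal} then reduces the task to discriminating, for each $\lambda$, the GU ensemble $(\tfrac{1}{n!},\rho_{\pi,\lambda}/\tr(\rho_\lambda))_{\pi\in S_n}$ on $W_\lambda$.

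The key point is that $p_\lambda$ is an \emph{irreducible} representation of $S_n$ on the $d_\lambda$-dimensional space $W_\lambda$, so \Cref{prop:geom uniform optimal} applies with $G=S_n$, $|G|=n!$, and dimension $d_\lambda$. It yields $\psucc^*(\lambda)=\tfrac{d_\lambda}{n!}\lmax(\rho_\lambda)/\tr(\rho_\lambda)$, together with an optimal covariant measurement on $W_\lambda$ generated by a multiple of the projector $\Pi^{\max}_{\rho_\lambda}$ onto the top eigenspace of $\rho_\lambda$. Summing via \Cref{prop:block-diagonal} gives $\psucc^*=\sum_\lambda s_\lambda\psucc^*(\lambda)$, and the factors $\tr(\rho_\lambda)$ cancel, leaving $\psucc^*=\tfrac{1}{n!}\sum_\lambda m_{d,\lambda}d_\lambda\lmax(\rho_\lambda)$ as claimed. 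The optimal global POVM is obtained by lifting each block measurement through the tensor-with-identity prescription of \Cref{prop:block-diagonal}, i.e.\ tensoring with $\one_{V_\lambda}$ and conjugating by $P_\pi$, giving the stated form $M_\pi=P_\pi\big(\bigoplus_\lambda c_\lambda\,\one_{V_\lambda}\otimes\Pi^{\max}_{\rho_\lambda}\big)P_\pi^\dagger$ with the constants $c_\lambda$ fixed by normalization.

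I expect the only real subtlety to be bookkeeping rather than conceptual: correctly identifying which tensor factor is redundant (the $\cU_d$-irrep $V_\lambda$, not the $S_n$-irrep $W_\lambda$), and tracking the trace normalizations so that the $\tr(\rho_\lambda)$ factors cancel cleanly in the final sum. The remaining check is that the lifted operators genuinely form a POVM, which follows from \eqref{eq:schurs-lemma} applied to $p_\lambda$ on $W_\lambda$; this identity also fixes the normalization constant $c_\lambda$ in front of each $\Pi^{\max}_{\rho_\lambda}$.
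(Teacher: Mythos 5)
Your proposal is correct and is essentially the paper's own proof: the paper simply observes that $\rho^\pi = \bigoplus_\lambda \one_{V_\lambda}\otimes p_\lambda(\pi)\rho_\lambda p_\lambda(\pi)^\dagger$ is block-diagonal with $\pi$-independent (redundant) factors $\one_{V_\lambda}$, that $p_\lambda$ is an irreducible $S_n$-representation on each $W_\lambda$, and then invokes \Cref{prop:geom uniform optimal} blockwise together with \Cref{prop:block-diagonal}, exactly as you do — your write-up just makes the weight and trace bookkeeping explicit. One remark in your favor: carrying out your final normalization check via \eqref{eq:schurs-lemma} applied to $p_\lambda$ on $W_\lambda$ forces the generator coefficient $\frac{d_\lambda}{n!\,d^{\max}_\lambda}$, whereas the proposition prints $\frac{m_{d,\lambda}}{d^{\max}_\lambda}$, which does not sum to the identity (one gets $\bigoplus_\lambda \frac{n!\,m_{d,\lambda}}{d_\lambda}\one_{V_\lambda}\otimes\one_{W_\lambda}$); so your decision to leave the constants ``fixed by normalization'' is the correct reading, and the printed coefficient is a typo.
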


\subsection{Permutation-invariant generator state}\label{sec: perm invariant generator}

Consider a permutation-invariant state $\rho_n$ on $(\mathbb{C}^d)^{\otimes n}$, which can be written as $\rho_n = \bigoplus_\lambda \rho_\lambda \otimes \one_{\cW_\lambda}$ for some positive operators $\rho_\lambda$ on $\cV_\lambda^d$.
Let the group $G$ be an $n$-design on $\mathbb{C}^d$. 
We define the geometrically uniform ensemble $(\rho_g)_{g\in G}$, where
\begin{align}
	\rho_g = g^{\otimes n} \rho_n (g^\dagger)^{\otimes n}=\bigoplus_\lambda q_\lambda(g)\rho_\lambda q_\lambda(g)^\dagger\otimes\one_{W_\lambda}.
\end{align}
Because of \Cref{Proposition: n design irrep}, \Cref{prop:geom uniform optimal} can be applied to each $V_\lambda$ individually. So we have proved:
\begin{prop}
\label{prop:perm-inv-generator}
    Let $G$ be a finite group forming an $n$-design on $\mathbb{C}^d$, $\rho_n=\bigoplus_\lambda \rho_\lambda\otimes \one_{W_\lambda}$ be an $S_n$-invariant state and $\rho_g\coloneqq g^{\otimes n}\rho_n (g^{\otimes n})^\dagger$ form a geometrically uniform ensemble $(\rho_g)_{g\in G}$. Then the optimal success probability of state discrimination is
    \begin{align}\psucc^*=\frac{1}{|G|}\sum_\lambda d_\lambda m_{d,\lambda}\lmax(\rho_\lambda).\end{align} 
    An optimal choice of POVM is given by
    \begin{align}M_g=g^{\otimes n}\bigg(\bigoplus_\lambda \frac{d_\lambda}{d_\lambda^{\max}}\Pi^{\max}_{\rho_\lambda}\otimes\one_{W_\lambda}\bigg)(g^\dagger)^{\otimes n},
    \end{align}
	where $d_\lambda^{\max}$ is again the dimension of the maximal eigenspace of $\rho_\lambda$.
\end{prop}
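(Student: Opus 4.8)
The plan is to reduce to the already-solved irreducible case by peeling off the Schur--Weyl block structure and then invoking the composability result of \Cref{prop:block-diagonal}. First I would pass to the decomposition \eqref{eq:schur-weyl-decomposition}. Since $\rho_n$ is $S_n$-invariant, Schur's lemma gives $\rho_n\cong\bigoplus_\lambda\rho_\lambda\otimes\one_{\cW_\lambda}$ with $\rho_\lambda$ supported on $\cV_\lambda^d$, and the form \eqref{eq:schur-weyl-action} of the $\cU_d$-action yields $g^{\otimes n}\cong\bigoplus_\lambda q_\lambda(g)\otimes\one_{\cW_\lambda}$. Consequently each ensemble element is block diagonal, $\rho_g\cong\bigoplus_\lambda q_\lambda(g)\rho_\lambda q_\lambda(g)^\dagger\otimes\one_{\cW_\lambda}$, which is precisely the structure \Cref{prop:block-diagonal} requires, with $\cV_\lambda^d,\cW_\lambda$ in the roles of $V_j,W_j$ and the $g$-independent redundant factor $\omega_\lambda=\one_{\cW_\lambda}/d_\lambda$.

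The one bookkeeping step is the weights. Writing $c_\lambda=\tr\rho_\lambda$, the weight of block $\lambda$ in $\rho_g$ equals $\tr(q_\lambda(g)\rho_\lambda q_\lambda(g)^\dagger)\,d_\lambda=c_\lambda d_\lambda$, independent of $g$ because conjugation preserves the trace; hence the block marginals $s_\lambda=c_\lambda d_\lambda$ are uniform over $g$ and each conditional ensemble $(\tfrac1{|G|},\,q_\lambda(g)\tfrac{\rho_\lambda}{c_\lambda}q_\lambda(g)^\dagger)_{g\in G}$ is itself a GU ensemble on $\cV_\lambda^d$ generated by $G$ via $q_\lambda$. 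The conceptual crux---and the step I expect to carry the real content---is that these within-block ensembles are \emph{irreducible}: by \Cref{Proposition: n design irrep} the $n$-design hypothesis on $G$ is equivalent to $q_\lambda$ being irreducible on each $\cV_\lambda^d$. This is exactly what licenses \Cref{prop:geom uniform optimal} block by block; without the design property the $q_\lambda$ could be reducible and the clean formula would fail.

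Applying \Cref{prop:geom uniform optimal} to the normalized generator $\rho_\lambda/c_\lambda$ on the $m_{d,\lambda}$-dimensional space $\cV_\lambda^d$ gives the block success probability $\psucc^*(\lambda)=\frac{m_{d,\lambda}}{|G|}\lmax(\rho_\lambda)/c_\lambda$ together with an optimal within-block measurement generated by a multiple of the full maximal-eigenspace projector $\Pi^{\max}_{\rho_\lambda}$ (taking $k=d_\lambda^{\max}$ in \Cref{prop:geom uniform optimal}). Finally I would assemble the pieces via \Cref{prop:block-diagonal}: summing with weights $s_\lambda=c_\lambda d_\lambda$, the factors $c_\lambda$ cancel and leave $\psucc^*=\sum_\lambda s_\lambda\psucc^*(\lambda)=\frac1{|G|}\sum_\lambda d_\lambda m_{d,\lambda}\lmax(\rho_\lambda)$, as claimed. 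The optimal POVM is the block-diagonal extension of the within-block measurements; factoring out the common conjugation by $g^{\otimes n}$ gives $M_g=g^{\otimes n}\big(\bigoplus_\lambda c'_\lambda\,\Pi^{\max}_{\rho_\lambda}\otimes\one_{\cW_\lambda}\big)(g^\dagger)^{\otimes n}$, where the constants $c'_\lambda$ are the unique ones making $\sum_g M_g=\one$; computing them is routine, using \eqref{eq:schurs-lemma} applied to each irreducible $q_\lambda$ on $\cV_\lambda^d$.
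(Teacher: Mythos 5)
Your proposal is correct and takes essentially the same route as the paper's own (very terse) proof: Schur--Weyl block decomposition of $\rho_n$ and $g^{\otimes n}$, irreducibility of each $q_\lambda$ on $\cV_\lambda^d$ via \Cref{Proposition: n design irrep}, per-block application of \Cref{prop:geom uniform optimal}, and assembly through \Cref{prop:block-diagonal} — you merely spell out the weight bookkeeping ($s_\lambda=c_\lambda d_\lambda$, cancellation of $c_\lambda$) that the paper leaves implicit. One remark: if you actually carry out your ``routine'' normalization step via \eqref{eq:schurs-lemma}, you get $c'_\lambda=\frac{m_{d,\lambda}}{|G|\,d_\lambda^{\max}}$, which does \emph{not} match the coefficient $\frac{d_\lambda}{d_\lambda^{\max}}$ printed in the proposition; your value is the right one (the printed coefficient fails to satisfy $\sum_g M_g=\one$), so your method in fact corrects what appears to be a typo in the paper's stated POVM.
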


\section{Pure GU ensembles}\label{sec:pureGU}
In this section, we do not assume irreducibility but assume that the generator state $\rho$ of the geometrically uniform ensemble is pure, $\rho=\ket{\psi}\bra{\psi}$. 
In this situation, \textcite{eldar2004optimal} proved that the pretty good measurement is the optimal measurement in discriminating the states in the GU ensemble, and \textcite{dallapozza2015optimality} gave necessary and sufficient conditions for the optimality of the pretty good measurement for \emph{compound} pure-state GU ensembles.
We will reprove the optimality of the pretty good measurement for pure-state GU ensembles from \cite{eldar2004optimal} by directly computing the success probability $p_{\mathrm{PGM}}$ and proving its optimality.
Here we also note the previous work \cite{krovi2015optimal} that derived a representation-theoretic formula for the optimal success probability of discriminating pure-state GU ensembles.

We first note that equation \eqref{eq: ppgm} for the success probability using the pretty good measurement can be simplified to the following expression:
\begin{align}
    p_{\mathrm{PGM}}=\frac{1}{n}\tr(\overline{\rho}^{-\frac{1}{2}}\rho\overline{\rho}^{-\frac{1}{2}}\rho)=\frac{1}{n}\bra{\psi}\overline{\rho}^{-\frac{1}{2}}\ket{\psi}^2 = \frac{1}{n} \left[\tr(\overline{\rho}^{-\frac{1}{2}} |\psi\rangle\langle\psi|)\right]^2.
\end{align}

By construction the average state $\overline{\rho}$ commutes with all $g\in G$, and hence also $g^\dagger \overline{\rho}^{-\frac{1}{2}} g = \overline{\rho}^{-\frac{1}{2}}$ for all $g\in G$. 
It follows that $\tr(\overline{\rho}^{-\frac{1}{2}} |\psi\rangle\langle\psi|) = \tr(g^\dagger\overline{\rho}^{-\frac{1}{2}} g|\psi\rangle\langle\psi|) = \tr(\overline{\rho}^{-\frac{1}{2}} g|\psi\rangle\langle\psi|g^\dagger)$ for all $g\in G$, and averaging over the full group $G$ gives
\begin{align}
	\tr(\overline{\rho}^{-\frac{1}{2}} |\psi\rangle\langle\psi|) = \frac{1}{n} \sum_{g\in G} \tr(\overline{\rho}^{-\frac{1}{2}} g|\psi\rangle\langle\psi|g^\dagger) =  \tr(\overline{\rho}^{-\frac{1}{2}}\overline{\rho}) = \tr(\overline{\rho}^{\frac{1}{2}}).
\end{align}
We have thus proved:
\begin{lem}
\label{lem:GU-PGM}
    Let $G$ be a finite group with $n=|G|$, and let $(n^{-1},\psi_g)_{g\in G}$ be a pure-state GU ensemble with average state $\overline{\rho}= \frac{1}{n}\sum_{g\in G} \psi_g$.
    Then the success probability of the pretty-good measurement is equal to 
    \begin{align}
	p_{\mathrm{PGM}} = \frac{1}{n}\left[\tr(\overline{\rho}^{\frac{1}{2}})\right]^2.
	\label{eq:pureGUppgm}
\end{align}
\end{lem}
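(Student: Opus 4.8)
The plan is to begin from the general PGM success-probability formula \eqref{eq: ppgm} for a GU ensemble, namely $p_{\mathrm{PGM}} = \frac{1}{n}\tr(\overline{\rho}^{-\frac{1}{2}}\rho\overline{\rho}^{-\frac{1}{2}}\rho)$, and specialize it to the pure generator $\rho = |\psi\rangle\langle\psi|$. Substituting this rank-one $\rho$, the two projectors inside the trace factor through the scalar $\langle\psi|\overline{\rho}^{-\frac{1}{2}}|\psi\rangle$, so the expression collapses to $p_{\mathrm{PGM}} = \frac{1}{n}\langle\psi|\overline{\rho}^{-\frac{1}{2}}|\psi\rangle^2 = \frac{1}{n}\left[\tr(\overline{\rho}^{-\frac{1}{2}}|\psi\rangle\langle\psi|)\right]^2$. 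This reduces the entire problem to evaluating the single scalar $\tr(\overline{\rho}^{-\frac{1}{2}}|\psi\rangle\langle\psi|)$ in a form that no longer involves the awkward inverse square root.

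The key structural input is that the average state $\overline{\rho} = \frac{1}{n}\sum_{g\in G}\psi_g$ is $G$-invariant by construction, hence commutes with every representation matrix $g$ and with every function of $\overline{\rho}$; in particular $g^\dagger\overline{\rho}^{-\frac{1}{2}}g = \overline{\rho}^{-\frac{1}{2}}$ for all $g\in G$. I would exploit this by inserting $g^\dagger(\cdot)g$ inside the trace and pushing it onto the projector, so that $\tr(\overline{\rho}^{-\frac{1}{2}}|\psi\rangle\langle\psi|) = \tr(\overline{\rho}^{-\frac{1}{2}}\,g|\psi\rangle\langle\psi|g^\dagger)$ holds separately for each $g$. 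Since the left-hand side is independent of $g$, averaging the right-hand side over the whole group is free, and $\frac{1}{n}\sum_{g\in G} g|\psi\rangle\langle\psi|g^\dagger$ is precisely $\overline{\rho}$ by definition. This turns the scalar into $\tr(\overline{\rho}^{-\frac{1}{2}}\overline{\rho}) = \tr(\overline{\rho}^{\frac{1}{2}})$, and substituting back into the squared expression gives \eqref{eq:pureGUppgm}.

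I do not expect a genuine obstacle here, as the argument is a short chain of identities; the essential mechanism is simply that the group average reconstitutes $\overline{\rho}$ and thereby cancels the inverse. The one point that deserves care is that $\overline{\rho}^{-\frac{1}{2}}$ is defined only on the support of $\overline{\rho}$, so I would note that $|\psi\rangle$ lies in that support (being one of the pure states averaged to form $\overline{\rho}$), which makes every trace above well defined and the restriction to the support harmless.
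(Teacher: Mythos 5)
Your proposal is correct and follows essentially the same route as the paper's own proof: specialize the GU formula \eqref{eq: ppgm} to the rank-one generator, use $G$-invariance of $\overline{\rho}$ to insert $g^\dagger(\cdot)g$ and average over the group so that $\tr(\overline{\rho}^{-\frac{1}{2}}|\psi\rangle\langle\psi|)$ collapses to $\tr(\overline{\rho}^{\frac{1}{2}})$. Your closing remark that $|\psi\rangle$ lies in the support of $\overline{\rho}$, making the restricted inverse harmless, is a small point of care the paper leaves implicit but is entirely consistent with its conventions.
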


Let us recall the following lower and upper ``generalized Holevo-Curlander'' bounds on the optimal success probability $\psucc^*$ of a \emph{generic} quantum state ensemble $(p_i,\rho_i)_{i\in [n]}$ \cite{ogawa1999strong,tyson2009two-sided,tyson2010two-sided}:
\begin{align}
    \left[ \tr \left( \sum\nolimits_{i=1}^n p_i^2 \rho_i^2\right)^{1/2} \right]^2 \leq \psucc^* \leq \tr \left( \sum\nolimits_{i=1}^n p_i^2 \rho_i^2\right)^{1/2}
    \label{eq:holevo-curlander}
\end{align}
It is straightforward to see that for a uniformly weighted pure-state ensemble the left-most expression in \eqref{eq:holevo-curlander} is equal to the expression in \eqref{eq:pureGUppgm}, and hence the lower Holevo-Curlander bound on $\psucc^*$ coincides with $p_{\mathrm{PGM}}$ for pure-state GU ensembles.

We can also compare Lemma \ref{lem:GU-PGM} to the following result proved for a generic pure state ensemble $(p_i,|\psi_i\rangle)_i$ in \cite{montanaro2007distinguishability}:
    \begin{align}
    	\ppgm=\sum_i(\sqrt{G})_{ii}^2\geq\frac{1}{n}\left[\tr(G^\frac{1}{2})\right]^2,
    \end{align}
    where $G$ is the Gram matrix of the unnormalized, weighted ensemble with $(G)_{ij} = \langle \varphi_i|\varphi_j\rangle$ and $|\varphi_i\rangle = \sqrt{p_i}|\psi_i\rangle$. Since $G$ has the same spectrum as $\overline{\rho}$, \Cref{lem:GU-PGM} shows that this lower bound is always achieved for a pure state GU ensemble.

Lemma \ref{lem:GU-PGM} yields a new and simple proof of the fact that the PGM is optimal for discriminating a pure-state GU ensemble, which is a special case of the result in \cite{eldar2004optimal}:

\begin{prop}[\cite{eldar2004optimal}]
\label{prop:pure-GU-PGM-optimal}
    For a GU ensemble with pure states, the PGM is optimal.
\end{prop}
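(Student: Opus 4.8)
The plan is to exhibit a single dual-feasible operator whose trace equals $\ppgm$, so that weak duality pins $\psucc^*$ down to $\ppgm$. Since the PGM is always a valid measurement we automatically have $\ppgm\leq\psucc^*$; it therefore suffices to produce a $G$-invariant operator $\tilde{K}\geq\frac{1}{n}\op{\psi}{\psi}$ with $\tr\tilde{K}=\ppgm$ and invoke \Cref{lemma: GU SDP}, which reduces the dual program to a minimization over $G$-invariant operators.

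The natural candidate is $\tilde{K}\coloneqq\frac{1}{n}\tr(\overline{\rho}^{1/2})\,\overline{\rho}^{1/2}$. This is manifestly $G$-invariant because $\overline{\rho}$ commutes with every $g\in G$, and by \Cref{lem:GU-PGM} its trace is $\frac{1}{n}[\tr(\overline{\rho}^{1/2})]^2=\ppgm$. The only nontrivial point is to verify the operator inequality $\tr(\overline{\rho}^{1/2})\,\overline{\rho}^{1/2}\geq\op{\psi}{\psi}$.

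To check this I would restrict to $\supp\overline{\rho}$, which contains $|\psi\rangle$ and on which $\overline{\rho}$ is invertible, and conjugate the inequality by $\overline{\rho}^{-1/4}$. This reduces the claim to $\tr(\overline{\rho}^{1/2})\,\one\geq\overline{\rho}^{-1/4}\op{\psi}{\psi}\overline{\rho}^{-1/4}$ on the support; since the right-hand side is rank one with single nonzero eigenvalue $\bra{\psi}\overline{\rho}^{-1/2}\ket{\psi}$, the inequality holds precisely when $\tr(\overline{\rho}^{1/2})\geq\bra{\psi}\overline{\rho}^{-1/2}\ket{\psi}$. But the identity $\bra{\psi}\overline{\rho}^{-1/2}\ket{\psi}=\tr(\overline{\rho}^{1/2})$ was already established in the derivation of \Cref{lem:GU-PGM}, so the inequality in fact holds with equality.

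With $\tilde{K}$ dual-feasible and $\tr\tilde{K}=\ppgm$, weak duality yields $\psucc^*\leq\ppgm$, and combined with $\ppgm\leq\psucc^*$ we conclude $\psucc^*=\ppgm$, i.e.\ the PGM is optimal. The one delicate point, and the main obstacle, is handling a possibly rank-deficient $\overline{\rho}$ cleanly: the conjugation argument must be carried out on $\supp\overline{\rho}$, using that $|\psi\rangle$ lies there, after which everything else is a one-line computation. That the scalar inequality is saturated is exactly the complementary-slackness signature certifying that $\tilde{K}$ and the PGM form an optimal primal-dual pair.
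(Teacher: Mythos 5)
Your proposal is correct and follows essentially the same route as the paper: the same dual witness $\tilde{K}=\frac{1}{n}\tr(\overline{\rho}^{1/2})\,\overline{\rho}^{1/2}$, the same conjugation by $\overline{\rho}^{-1/4}$ reducing feasibility to the rank-one eigenvalue bound, and the same key identity $\bra{\psi}\overline{\rho}^{-1/2}\ket{\psi}=\tr(\overline{\rho}^{1/2})$ from \Cref{lem:GU-PGM}. The only differences are cosmetic: you invoke \Cref{lemma: GU SDP} explicitly to handle $G$-invariance and you spell out the restriction to $\supp\overline{\rho}$, both of which the paper leaves implicit.
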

\begin{proof}
Consider the operator
\begin{align}
	K=\frac{1}{n}\tr(\overline{\rho}^{\frac{1}{2}})\overline{\rho}^{\frac{1}{2}}.
\end{align} 
To prove the claim, it suffices to show that $K$ satisfies $K\geq\frac{1}{n}\rho$, since then $K$ is a feasible point in the dual program \eqref{eq:dual} and 
\begin{align}
	p_{\mathrm{PGM}} \leq p^*\leq\tr(K)=\frac{1}{n}\tr(\overline{\rho}^{\frac{1}{2}})^2=p_{\mathrm{PGM}}.
\end{align}
To this end, consider the rank-1 operator
\begin{align}
	P\coloneqq\overline{\rho}^{-\frac{1}{4}}\rho\overline{\rho}^{-\frac{1}{4}}=\overline{\rho}^{-\frac{1}{4}}\ket{\psi}\bra{\psi}\overline{\rho}^{-\frac{1}{4}},
\end{align}
whose non-zero eigenvalue (corresponding to the eigenvector $\overline{\rho}^{-\frac{1}{4}}|\psi\rangle$) is equal to
\begin{align}\bra{\psi}\overline{\rho}^{-\frac{1}{2}}\ket{\psi}=\tr(\overline{\rho}^{\frac{1}{2}}),\end{align}
so that $P\leq \tr(\overline{\rho}^{\frac{1}{2}}) \one$.
Hence,
\begin{align}\frac{1}{n}\rho=\frac{1}{n}\overline{\rho}^{\frac{1}{4}}P\overline{\rho}^{\frac{1}{4}}\leq\frac{1}{n}\tr(\overline{\rho}^{\frac{1}{2}})\overline{\rho}^{\frac{1}{2}}=K,
\end{align}
proving feasibility of $K$ in \eqref{eq:dual} and thus concluding the proof.
\end{proof}

\begin{rem}
    \Cref{prop:pure-GU-PGM-optimal} together with \Cref{ex:coupon} show that the lower Holevo-Curlander bound on $\psucc^*$ in \eqref{eq:holevo-curlander} is in fact tight and equal to the PGM success probability \eqref{eq:pureGUppgm}.
    These bounds were used in \cite{hadiashar2024optimal} to give a lower bound on the sample complexity in the quantum coupon collector problem (see \cite{arunachalam2020quantumcoupon} and \Cref{ex:coupon}).
\end{rem}

\subsection{Example: Hidden subgroup problem over semidirect product groups}
The hidden subgroup problem over groups of the form $G=A\rtimes_\varphi\mathbb{Z}_p$ \cite{bacon2005semidirect}, where $A$ is abelian and $p$ is prime, can be reduced to the discrimination of the following uniform ensemble for $d\in A$:
\begin{align}
	\rho_d^{\otimes k}=\frac{1}{|G|^k}\bigoplus_{x\in A^k}\ket{\phi_{x,d}}\bra{\phi_{x,d}}\otimes\ket{x}\bra{x}
	\label{eq:HSG-semidirect}
\end{align}
where
\begin{align}
	\ket{\phi_{x,d}}\coloneqq\bigotimes_{i=1}^{k}\left(\sum\nolimits_{b_i\in\mathbb{Z}_p}\chi_{x_i}\left(\Phi^{(b)}(d)\right)\ket{b}\right)
\end{align}
with $\Phi^{(b)}(d)\coloneqq\sum_{i=0}^{b-1}\varphi^i(d)$ and $\chi_a$ the $a$-th irreducible group character of $A$. Note that $\ket{\phi_{x,d}}$ is a pure state GU ensemble generated by the group $A$ with the action $a\colon\ket{\phi_{x,d}}\mapsto\ket{\phi_{x,ad}}$, so the PGM is optimal by \Cref{prop:pure-GU-PGM-optimal} (and \cite{eldar2004optimal}), and by \eqref{eq:pureGUppgm} it suffices to compute $\tr(\overline{\phi}_x^{\frac{1}{2}})$ for all $x\in A^k$ where $\overline{\phi_x}=\frac{1}{|A|}\sum_{d\in A}\ket{\phi_{x,d}}\bra{\phi_{x,d}}$.

By \cite[Lemma 7]{bacon2005semidirect}, for any $b_i\in\mathbb{Z}_p$ there exists a function $\hat{\Phi}^{(b_i)}\colon A\to A$ such that for all $a,d\in A$ we have
\begin{align}
	\chi_a(\Phi^{(b_i)}(d))=\chi_{\hat{\Phi}^{(b_i)}(a)}(d).
\end{align} 
Let $S_w^x:=\{b\in(\mathbb{Z}_p)^k:\hat{\Phi}^{(b)}(x)=w\}$ and $\ket{S_w^x}:=\frac{1}{\sqrt{|S_w^x|}}\sum_{b\in S_w^x}\ket{b}$, then the average is explicitly computed to be
\begin{align}
    \overline{\phi_x}=\sum_{w\in A}|S_w^x|\ket{S_w^x}\bra{S_w^x}.
\end{align}
Therefore by \cref{eq:pureGUppgm} the optimal success probability is 
\begin{align}
	p^*=\frac{1}{|G|^k}\sum_{x\in A^k}\frac{1}{|A|}\tr\left(\overline{\phi}_x^{\frac{1}{2}}\right)^2=\frac{p}{|G|^{k+1}}\sum_{x\in A^k}(\sum_{w\in A}\sqrt{|S_w^x|})^2.
\end{align}
This recovers the result in \cite{bacon2005semidirect}.

\subsection{Example: Port-based teleportation entanglement fidelity} 
\label{sec:pbt}

\subsubsection{Set up}
In port-based teleportation the entanglement fidelity of the teleportation channel reduces to a state discrimination problem \cite{ishizaka2008asymptotic,ishizaka2009quantum,Beigi_2011,chitambar2024teleportation}:
\begin{align} 
F(\Lambda)=\frac{n}{d^2}\psucc
\end{align}
where the ensemble in the state discrimination is given by
\begin{align} 
\rho_i=\tr_{B_i^c}(\phi_{AB}^+)^{\otimes n}=\phi^+_{A_iB}\otimes \frac{1}{d^{n-1}}\one_{A_i^c}.
\label{eq:pbt-states}
\end{align}
This is a GU ensemble generated by $S_n$ acting on the $A^n$-system of $\rho=\rho_1$.
Note that in this ensemble each state is repeated $(n-1)!$ times. Therefore, if $p^*$ is the optimal success probability of discriminating the GU ensemble generated by $\rho=\rho_1$ and $S_n$, then
\begin{align} 
F(\Lambda)=\frac{n(n-1)!}{d^2}p^*=\frac{n!}{d^2}p^*.
\label{eq:fidelity-psucc}
\end{align}


\subsubsection{Decomposition of the space}
Following \cite{christandl2021asymptotic,leditzky2020}, we first note that by the dual Pieri rule the representation space $(\mathbb{C}^d)^{\otimes (n+1)}$ decomposes as a $(U(d)\times S_n)$-representation:
\begin{align}\label{eq: space decomp 1}
    (\mathbb{C}^d)^{\otimes (n+1)}\cong\bigoplus_{\alpha\vdash_d(n-1)}V_\alpha\otimes\left(\bigoplus_{\mu=\alpha+\square}W_\mu\right).
\end{align}
Restricting $W_\mu$ to $S_{n-1}$ gives
\begin{align} 
W_\mu\big\downarrow_{S_{n-1}}\cong\bigoplus_{\beta=\mu-\square}W_\beta,
\end{align}
and therefore we can decompose $(\mathbb{C}^d)^{\otimes (n+1)}$ as a $(U(d)\times S_{n-1})$-representation as follows:
\begin{align}
    (\mathbb{C}^d)^{\otimes (n+1)}\cong\bigoplus_{\alpha\vdash_d(n-1)}V_\alpha\otimes\left(\bigoplus_{\mu=\alpha+\square}\bigoplus_{\beta=\mu-\square}W_\beta\right).
\end{align}
Grouping by $\beta=\alpha$ and $\beta\neq\alpha$, we get
\begin{align}\label{eq: space decomp 2}
    (\mathbb{C}^d)^{\otimes (n+1)}\cong\bigoplus_{\alpha\vdash_d(n-1)}V_\alpha\otimes\left(W_\alpha\otimes\mathbb{C}^{n_\alpha}\oplus \left(\bigoplus_{\mu=\alpha+\square}\bigoplus_{\substack{\beta=\mu-\square\\ \beta\neq\alpha}}W_\beta \right)\right)
\end{align}
where $n_\alpha=\#\{\mu:\mu=\alpha+\square\}$.

There is a more direct way to derive \cref{eq: space decomp 2}: consider the $(U(d)\times S_{n-1})$-action on $A_2...A_n$ and $U(d)$-action on $A_1B$ respectively:
\begin{align}\label{eq: space decomp 3}
    (\mathbb{C}^d)^{\otimes(n+1)}=H_{A_1B}\otimes H_{A_2...A_n}\cong\bigg(V_{\square}\otimes V_{\square}^*\bigg)\otimes\bigg(\bigoplus_{\alpha\vdash_d(n-1)}V_\alpha\otimes W_\alpha\bigg)
\end{align}
Since $V_\square\otimes V_\square^*\otimes V_\alpha\cong\bigoplus_{\mu=\alpha+\square}\bigoplus_{\beta=\mu-\square}V_\beta$, we then get
\begin{align}\label{eq: space decomp 4}
    (\mathbb{C}^d)^{\otimes(n+1)}\cong\bigoplus_{\alpha\vdash_d(n-1)}\bigoplus_{\mu=\alpha+\square}\bigoplus_{\beta=\mu-\square}V_\beta\otimes W_\alpha\cong\bigoplus_{\alpha\vdash_d(n-1)}V_\alpha\otimes\bigg(W_\alpha\otimes\mathbb{C}^{n_\alpha}\oplus(\bigoplus_{\mu=\alpha+\square}\bigoplus_{\substack{\beta=\mu-\square\\ \beta\neq\alpha}}W_\beta)\bigg).
\end{align}

\subsubsection{Decomposition of $\rho$}
Consider now the following state from \eqref{eq:pbt-states} that generates the ensemble in the state discrimination problem associated with port-based teleportation:
\begin{align} 
\rho=\rho_1=\ket{\phi^+}\bra{\phi^+}_{A_1B}\otimes \frac{1}{d^{n-1}}\one_{A_2...A_n},
\end{align}
where $\ket{\phi^+}_{A_1B}=\frac{1}{\sqrt{d}}\sum_{i=1}^{d}\ket{i}_{A_1}\ket{i}_B$. Since $\rho$ is $U^{\otimes n}\otimes\overline{U}$ invariant, $\rho$ has the following form with respect to the decomposition \cref{eq: space decomp 1}:
\begin{equation}\label{eq: decomp of rho}
    \rho=\bigoplus_{\alpha\vdash_d(n-1)}\frac{1}{m_\alpha}\one_{V_\alpha}\otimes\rho_\alpha.
\end{equation}
Now consider \cref{eq: space decomp 3} and \cref{eq: space decomp 4}.
Under the canonical isomorphism $V_{\square}\otimes V_{\square}^*\cong\text{End}(V_{\square})$, the group $U(d)$ acts on $\text{End}(V_{\square})$ by conjugation, and the only fixed point of this action is the identity matrix up to scalars. 
On the other hand, $\ket{\phi^+}_{A_1B}\in V_{\square}\otimes V_{\square}^*$ corresponds to a multiple of the identity matrix in $\text{End}(V_{\square})$ under this isomorphism, and hence $\ket{\phi^+}_{A_1B}$ is a vector in the trivial representation $V_{\emptyset}\subset V_{\square}\otimes V_{\square}^*$. Note that $V_\emptyset\otimes V_\alpha\cong V_\alpha$, so $\ket{\phi^+}_{A_1B}$ corresponds to a vector in $\mathbb{C}^{n_\alpha}$. As a result, 
\begin{align}\label{eq: PBT decomposition}
    \rho=\bigoplus_{\alpha\vdash_d(n-1)}\frac{1}{m_\alpha}\one_{V_\alpha}\otimes\bigg(\frac{1}{d_\alpha}\one_{W_\alpha}\otimes\psi_\alpha\oplus(0)_\alpha\bigg),
\end{align}
where $\psi_\alpha$ is a rank-one operator acting on $\mathbb{C}^{n_\alpha}$, and $(0)_\alpha$ denotes a fixed state supported in the kernel of $\rho$. Comparing with \cref{eq: decomp of rho} we get, with respect to the decomposition in \cref{eq: space decomp 4},
\begin{align} \label{eq: PBT generator}
\rho_\alpha=\frac{1}{d_\alpha}\one_{W_\alpha}\otimes\psi_\alpha\oplus(0)_\alpha.
\end{align}
\begin{rem}
    Note that \cref{eq: PBT generator} is technically not a pure-state GU generator because $W_\alpha\otimes\mathbb{C}^{n_\alpha}$ is not $S_n-$invariant, so we cannot directly cite \Cref{lem:GU-PGM} and \Cref{prop:pure-GU-PGM-optimal}. Rather, it is a direct generalization from the operator algebraic perspective, and the idea is essentially the same as we show in the next subsection.
\end{rem}

\subsubsection{Success probability of PGM}
With respect to the decomposition in \cref{eq: space decomp 1}, the average state is given by 
\begin{align}
    \overline{\rho}=\bigoplus_{\alpha\vdash_d(n-1)}\bigoplus_{\mu=\alpha+\square}r_{\alpha,\mu}\one_{V_\alpha}\otimes\one_{W_\mu}.
\end{align}
where $r_{\alpha,\mu}=\frac{1}{d^n}\frac{m_\mu d_\alpha}{m_\alpha d_\mu}$ \cite{leditzky2020}. Comparing this with \cref{eq: decomp of rho}, the average of $\rho_\alpha$ can be expressed with respect to \cref{eq: space decomp 4} as
\begin{align}
    \overline{\rho}_\alpha=\left(\frac{1}{d_\alpha}\one_{W_\alpha}\otimes D_\alpha\right)\oplus Q
\end{align}
where $D_\alpha$ is a diagonal matrix acting on $\mathbb{C}^{n_\alpha}$ with $(D_\alpha)_{\mu\mu}=\frac{d_\alpha^2 m_\mu}{d^n d_\mu}$ and $\supp Q\perp\supp\rho_\alpha$. Therefore,
\begin{align}
    \ppgm=\frac{1}{n!}\tr(\overline{\rho}^{-\frac{1}{2}}\rho\overline{\rho}^{-\frac{1}{2}}\rho)=\frac{1}{n!}\sum_{\alpha\vdash_d(n-1)}\bra{\psi_\alpha}D_\alpha^{-\frac{1}{2}}\ket{\psi_\alpha}^2.
\end{align}
Since $D_\alpha$ is diagonal,
\begin{align}
    \bra{\psi_\alpha}D_\alpha^{-\frac{1}{2}}\ket{\psi_\alpha}=\tr(D_\alpha^{-\frac{1}{2}}\psi_\alpha)=\sum_{\mu=\alpha+\square}\frac{(\psi_\alpha)_{\mu\mu}}{\sqrt{(D_\alpha)_{\mu\mu}}}.
\end{align}
Note that $(D_\alpha)_{\mu\mu}=\frac{d_\alpha}{d_\mu}(\psi_\alpha)_{\mu\mu}$ because $\overline{\rho}_\alpha=\bigoplus_{\mu=\alpha+\square}\tr\bigg((\rho_\alpha)_{\mu\mu}\bigg)\frac{1}{d_\mu}\one_{W_\mu}$.
Hence,
\begin{align}
    \bra{\psi_\alpha}D_\alpha^{-\frac{1}{2}}\ket{\psi_\alpha}=\sum_{\mu=\alpha+\square}\sqrt{\frac{d_\mu}{d_\alpha}(\psi_\alpha)_{\mu\mu}}=\frac{1}{\sqrt{d^n}}\sum_{\mu=\alpha+\square}\sqrt{m_\mu d_\mu},
\end{align}
and therefore
\begin{align}
    \ppgm=\frac{1}{n!d^n}\sum_{\alpha\vdash_d(n-1)}(\sum_{\mu=\alpha+\square}\sqrt{m_\mu d_\mu})^2.\label{eq:pbt-pgm}
\end{align}
Substituting this expression in \eqref{eq:fidelity-psucc} now yields the known exact expression for the entanglement fidelity of the standard PBT protocol operating on $n$ maximally entangled states first derived in \cite{studzinski2017port} (see also \cite{leditzky2020}).

\subsubsection{Fully optimized protocol}
In the fully optimized PBT setting one aims to maximize the entanglement fidelity of the teleportation channel by choosing the optimal resource state.
It is known that pairs of bipartite maximally entangled states are in fact sub-optimal, and without loss of generality the resource state can be chosen as a purification of a symmetric Werner state \cite{ishizaka2009quantum,mozrzymas2018optimal,christandl2021asymptotic}.

In the state discrimination picture, this amounts to a state ensemble consisting of quantum states
$\tilde{\rho}_i=(X_{A^n}\otimes \one_B)\rho_i(X_{A^n}^\dagger\otimes \one_B)$ where $X_{A^n}$ has both $U^{\otimes n}$ and $S_n$ symmetry, i.e.
\begin{align} 
X_{A^n}=\bigoplus_{\mu\vdash_d n} \sqrt{c_\mu}\,\one_{V_\mu}\otimes \one_{W_\mu}
\label{eq:X-decomposition1}
\end{align}
for some coefficients $c_\mu$. Since $V_\mu\otimes\mathcal{H}_B\cong\bigoplus_{\alpha+\square=\mu}V_\alpha$ as a $U(d)-$representation, we have
\begin{align}
X_{A^n}\otimes \one_B=\bigoplus_{\mu\vdash_d n}\sqrt{c_\mu}\one_{\bigoplus_{\alpha+\square=\mu}V_\alpha}\otimes\one_{W_\mu}=\bigoplus_{\alpha\vdash_d(n-1)}\bigoplus_{\mu=\alpha+\square}\sqrt{c_\mu}\,\one_{V_\alpha}\otimes\one_{W_\mu},
\label{eq:X-decomposition2}
\end{align}
therefore 
\begin{align}
    (\tilde{\psi}_\alpha)_{\mu\mu}=c_\mu(\psi_\alpha)_{\mu\mu}.
\end{align}
The same calculation procedure as in the previous section then gives
\begin{align}
    \ppgm=\frac{1}{n!d^n}\sum_{\alpha\vdash_d(n-1)}(\sum_{\mu=\alpha+\square}\sqrt{c_\mu m_\mu d_\mu})^2,
\end{align}
which substituted into \eqref{eq:fidelity-psucc} yields the exact expression of the entanglement fidelity of the fully optimal PBT protocol first derived in \cite{mozrzymas2018optimal} (see also \cite{leditzky2020}).

\subsubsection{Optimality of PGM} 
Using our approach we also provide a new proof of the optimality of the pretty good measurement for PBT \cite{mozrzymas2018optimal,leditzky2020}.
By \Cref{prop:block-diagonal}, it suffices to show that PGM is the optimal measurement for the (normalized) generator states $\rho_\alpha$ for each $\alpha$. We first review the condition of the optimality of PGM given by Eldar:
\begin{lem}[\cite{eldar2004optimal}]\label{lem: PGM optimal Eldar}
    If $\rho=\phi\phi^*$ and $\phi^*\overline{\rho}^{-\frac{1}{2}}\phi=c\one$ for some scalar $c$, then PGM is optimal for the GU ensemble with generator $\rho$ and average state $\overline{\rho}$.
\end{lem}
In particular, let $\phi_\alpha=\frac{1}{\sqrt{d_\alpha}}\one_{W_\alpha}\otimes\ket{\psi_\alpha}$, then $\rho_\alpha=\phi_\alpha\phi_\alpha^*$ and
\begin{align}
    \phi_\alpha^*\overline{\rho}_\alpha^{-\frac{1}{2}}\phi_\alpha=\frac{\bra{\psi_\alpha}D_\alpha^{-\frac{1}{2}}\ket{\psi_\alpha}}{\sqrt{d_\alpha}}\one_{W_\alpha},
\end{align}
so PGM is optimal.

\section{General bounds on PGM}
\label{sec:general-GU}
\subsection{A general lower bound on PGM for generic GU ensembles}
In this section we let $G\ni g\mapsto U_g$ be an arbitrary (projective) representation of a finite group $G$.\footnote{Phases of the form $U_gU_h = c(g,h) U_{gh}$ with $|c(g,h)|=1$ do not matter in the GU state discrimination problem.}
Let $\sigma$ be a generator state on $\mathbb{C}^d$ and for $n\in\mathbb{N}$ consider the $n$-copy GU ensemble $\cE=(q_g,\sigma_g^{\otimes n})_{g\in G}$ with $q_g = \frac{1}{|G|}$ and $\sigma_g = U_g\sigma U_g^\dagger$.
We aim to derive a lower bound on the success probability $\ppgm(\cE)$ of distinguishing the states in $\cE$ with the PGM.
To this end, we first pass to the pure-state ensemble obtained from the spectral decompositions of the mixed states $\sigma_g^{\otimes n}$, following~\cite{montanaro2007distinguishability}.

Let $\sigma = \sum_{i=0}^{d-1} \lambda_i |i\rangle\langle i|$ be a spectral decomposition of the generator state.
Then $\sigma_g = \sum_i \lambda_i |i;g\rangle\langle i;g|$ with $|i;g\rangle\coloneqq U_g |i\rangle$ is a spectral decomposition of $\sigma_g$ for $g\in G$.
In the $n$-copy setting, we have spectral decompositions
\begin{align}
	\sigma_g^{\otimes n} = \sum_{i^n\in [d]^n} \lambda_{i^n} |i^n; g\rangle \langle i^n;g|,
\end{align} 
where we use the following notation: $[d]^n$ is the set of all $d$-ary strings of length $n$, and for $i^n = (i_1,\dots,i_n)$ (with $i_j\in [d]$) we set
\begin{align}
	\lambda_{i^n} &\coloneqq \prod_{k=1}^n \lambda_{i_k} & 
	|i^n\rangle &\coloneqq \bigotimes_{k=1}^n |i_k\rangle &
	|i^n;g\rangle &\coloneqq U_g^{\otimes n} |i^n\rangle.
\end{align}
Finally, we define the pure-state ensemble $\cF= (q_{i^n,g}, |i^n;g\rangle)_{i^n\in [d]^n, g\in G}$ with $q_{i^n,g} = \lambda_{i^n}/|G|$.
For this ensemble, $\ppgm(\cE) \geq \ppgm(\cF)$ by \cite[Lemma 2.4]{montanaro2007distinguishability}.

To bound $\ppgm(\cF)$ from below, we use the following lower bound based on pair-wise fidelities proved in \cite[eq.~(9)]{montanaro2007distinguishability}:
\begin{align}
	\ppgm(\cF) \geq \sum_{i^n\in [d]^n} \sum_{g\in G} q_{i^n,g}^2 \left(\sum\nolimits_{j^n,h} q_{j^n,h} |\langle i^n;g| j^n;h\rangle|^2\right)^{-1}\label{eq:pairwise-bound}
\end{align}
Fixing $i^n\in [d]^n$ and $g\in G$, we compute:
\begin{align}
	\sum_{j^n,h} q_{j^n,h} |\langle i^n;g| j^n;h\rangle|^2 &= \frac{1}{|G|}\sum_{j^n,h} \lambda_{j^n} |\langle i^n| (U_g^\dagger)^{\otimes n} U_h^{\otimes n}| j^n\rangle|^2 \\
	&= \frac{1}{|G|}\sum_{j^n,h} \lambda_{j^n} |\langle i^n| U_{h}^{\otimes n} | j^n\rangle|^2\label{eq:group-bijection}\\
	&= \frac{1}{|G|}\sum_{j^n,h} \lambda_{j^n} |\langle i^n| U_{h}^{\otimes n} | j^n\rangle\langle j^n | (U_h^\dagger)^{\otimes n} | i^n\rangle\\
	&= \frac{1}{|G|} \sum_h   \langle i^n| U_{h}^{\otimes n} \left(\sum\nolimits_{j^n}\lambda_{j^n}| j^n\rangle\langle j^n |\right) (U_h^\dagger)^{\otimes n} | i^n\rangle\\
	&= \frac{1}{|G|} \sum_h   \langle i^n| U_{h}^{\otimes n} \sigma^{\otimes n} (U_h^\dagger)^{\otimes n} | i^n\rangle\\
	&= \frac{1}{|G|} \sum_h  \prod_{k=1}^n \langle i_k|\sigma_h|i_k\rangle.\label{eq:denominator}
\end{align}
For \eqref{eq:group-bijection} note that $U_{g}^\dagger U_h = c(g^{-1},h) U_{g^{-1}h} = c(g^{-1},h) U_{h'}$ for some $h'\in G$. 
Since the phase $c(g^{-1},h)$ is irrelevant and $h\mapsto g^{-1}h$ is a bijection on $G$, the statement follows.

Using \eqref{eq:denominator} in \eqref{eq:pairwise-bound} and canceling factors of $|G|$, we get
\begin{align}
	\ppgm(\cF) \geq \sum_{i^n\in[d]} \frac{\lambda_{i^n}^2}{\sum_h  \prod_{k=1}^n \langle i_k|\sigma_h|i_k\rangle}.\label{eq:pairwise-compact}
\end{align}

\subsection{Sample complexity for the hidden subgroup problem}
The method in \cite{montanaro2007distinguishability} is powerful and not restricted to GU ensembles. Here we present an example of a non-GU ensemble (but still with group structures involved) where the method still applies, reproving an earlier result of \textcite{hayashi2006hiddensubgroup} without appealing to the Hayashi-Nagaoka operator inequality:
\begin{prop}[\cite{hayashi2006hiddensubgroup}]
    \label{prop:hayashi-bound}
    Let $\{H^i:i=1,...,m\}$ be a set of candidate subgroups of a finite group $G$ such that $|H^i|=|H|$ for all $i$. Then the sample complexity of solving the hidden subgroup problem is at most $O(\frac{\log m}{\log t})$ where $t\coloneqq \min_{1\leq i<j\leq m}\frac{|H|}{|H^i\cap H^j|}$.
\end{prop}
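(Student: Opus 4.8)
The plan is to realize the hidden subgroup problem as the discrimination of the $n$-copy ensemble $\cE = (\tfrac{1}{m},\rho_{H^i}^{\otimes n})_{i=1}^m$ of hidden subgroup states \eqref{eq:hidden-subgroup-state}, and to lower-bound the PGM success probability $\ppgm(\cE)$ using the machinery adapted from \cite{montanaro2007distinguishability} above. Since this ensemble is \emph{not} GU (the candidate subgroups are not related by a single group action), I would not use the compact GU bound \eqref{eq:pairwise-compact} but instead return to the general pairwise-fidelity bound \eqref{eq:pairwise-bound}. First I would pass to the pure-state ensemble $\cF$ obtained from the spectral decompositions $\rho_{H^i}^{\otimes n}=N^{-n}\sum_{\vec c}|\vec c;H^i\rangle\langle\vec c;H^i|$, where $N=|G|/|H|$, the vectors $|\vec c;H^i\rangle = |c_1H^i\rangle\otimes\cdots\otimes|c_nH^i\rangle$ range over tuples of coset states, and each carries weight $q_{i,\vec c}=\tfrac{1}{mN^n}$. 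By \cite[Lemma 2.4]{montanaro2007distinguishability} one has $\ppgm(\cE)\geq\ppgm(\cF)$, so it suffices to bound the latter.

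The key computation is the overlap appearing in the denominator of \eqref{eq:pairwise-bound}. Because same-subgroup coset states are orthonormal and fidelities factorize across the $n$ copies, for fixed $(i,\vec c)$ the denominator collapses to $\tfrac{1}{mN^n}\bigl(1+\sum_{j\neq i}\prod_{k=1}^n A_{ij}(c_k)\bigr)$, where $A_{ij}(c)\coloneqq\sum_{c'}|\langle cH^i|c'H^j\rangle|^2$ is the squared norm of the projection of $|cH^i\rangle$ onto the span of the $H^j$-coset states. I would evaluate $A_{ij}(c)$ by recognizing $\sum_{c'}|c'H^j\rangle\langle c'H^j|$ as the projector onto right-$H^j$-invariant functions on $G$ and carrying out the resulting coset-counting sum; this yields $A_{ij}(c)=|H^i\cap H^j|/|H|$, independent of the representative $c$. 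This step, identifying and evaluating the overlap, is the main obstacle, but it reduces to a clean group-theoretic count once the projector structure is recognized.

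With $A_{ij}=|H^i\cap H^j|/|H|\leq 1/t$ for $i\neq j$ (by the definition of $t$) and $A_{ii}=1$, the denominator is bounded above by $\tfrac{1}{mN^n}(1+mt^{-n})$ uniformly in $(i,\vec c)$. Summing the $mN^n$ terms in \eqref{eq:pairwise-bound} then gives the clean estimate
\begin{align}
    \psucc^*\geq\ppgm(\cE)\geq\ppgm(\cF)\geq\frac{1}{1+mt^{-n}}.
\end{align}

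Finally, it remains to choose $n$ so that this lower bound exceeds a fixed constant (say $2/3$), which can then be amplified by standard means. This requires $mt^{-n}\leq\tfrac12$, i.e.\ $t^n\geq 2m$, i.e.\ $n\geq\log(2m)/\log t$, giving a sample complexity of $O(\log m/\log t)$ as claimed. I would remark that $t>1$ whenever the candidate subgroups are pairwise distinct, since they share the common order $|H|$ and so $|H^i\cap H^j|<|H|$; hence $\log t>0$ and the bound is meaningful. Notably, this route avoids the Hayashi–Nagaoka operator inequality used in the original argument of \cite{hayashi2006hiddensubgroup}.
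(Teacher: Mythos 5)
Your proposal is correct and takes essentially the same route as the paper's proof: both pass to the pure-state ensemble of tensor products of coset states, apply Montanaro's pairwise-fidelity lower bound on the PGM, evaluate the overlaps via the fact that a nonempty intersection of two cosets is a coset of the intersection of the subgroups (your $A_{ij}(c)=|H^i\cap H^j|/|H|$ is the paper's $\langle cH^i|\rho_j|cH^i\rangle=|H^i\cap H^j|/|G|$ rescaled by $|K|$), and arrive at the identical bound $\ppgm\geq(1+mt^{-n})^{-1}$, hence sample complexity $O(\log m/\log t)$. The only cosmetic difference is that you invoke \eqref{eq:pairwise-bound} together with Montanaro's Lemma 2.4 explicitly, whereas the paper applies the uniform-ensemble form of the same bound directly to the pure-state ensemble.
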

\begin{proof}
    Define 
\begin{align}
	\rho_i\coloneqq\frac{1}{|K|}\sum_{c\in K_i}\ket{cH^i}\bra{cH^i}
\end{align}
where $K_i$ is a set of coset representatives of $H^i$ and $\ket{cH^i}\coloneqq|H|^{-1/2}\sum_{h\in H^i}\ket{ch}$. Then 
\begin{align}
	\rho_i^{\otimes n}=\frac{1}{|K|^n}\sum_{c^n\in K_i^n}\ket{c^nH^i}\bra{c^nH^i}
\end{align}
where $\ket{c^nH^i}=\bigotimes_{l=1}^{n}\ket{c_lH^i}$. 
Now apply \cite[eq.~(10)]{montanaro2007distinguishability} to the uniformly distributed pure state ensemble $\{\ket{c^nH^i}:1\leq i\leq m, c^n\in K_i^n\}$ to get
\begin{align}
	\ppgm\geq\frac{1}{m|K|^n}\sum_{1\leq i\leq m,c^n\in K_i^n} \left(\sum_{1\leq j\leq m,d^n\in K_j^n}|\langle c^nH^i | d^nH^j\rangle|^2\right)^{-1}.
\end{align}
For a fixed $(i,c^n)$,
\begin{align}
    \sum_{1\leq j\leq m,d^n\in K_j^n}|\langle c^nH^i | d^nH^j\rangle|^2&=\sum_{1\leq j\leq m,d^n\in K_j^n}|\langle c^nH^i | d^nH^j\rangle|^2\\
    &=\sum_{j,d^n}\langle c^nH^i|d^nH^j\rangle\langle d^n H^j|c^n H^i\rangle\\
    &=|K|^n\sum_j\langle c^nH^i|\rho_j^{\otimes n}|c^nH^i\rangle\\
    &=|K|^n\sum_j\prod_{l=1}^{n}\langle c_lH^i|\rho_j|c_lH^i\rangle.
\end{align}
Now we compute
\begin{align}\langle cH^i|\rho_j|cH^i\rangle=\frac{1}{|K|}\sum_{d\in K_j}|\langle cH^i|dH^j\rangle|^2=\frac{1}{|G||H|}\sum_d|cH^i\cap dH^j|^2=\frac{1}{|G||H|}\frac{|H|}{|H^i\cap H^j|}|H^i\cap H^j|^2,\end{align}
where the last equality is due to the standard fact from group theory that for any two subgroups $H_1, H_2\leq G$ and any $g_1,g_2\in G$, the intersection $g_1H_1\cap g_2H_2$ is either empty or a coset of $H_1\cap H_2$.

Therefore, 
\begin{align}
	D_{i,c^n}=\frac{|K|^n}{|G|^n}\sum_j|H^i\cap H^j|^n=\frac{|K|^n}{|G|^n}(|H|^n+\delta_i)=1+\frac{\delta_i}{|H|^n},
\end{align} 
where 
\begin{align}
	\delta_i=\sum_{j\neq i}|H^i\cap H^j|^n\leq m\max_{1\leq j<k\leq m}|H^j\cap H^k|^n=m\frac{|H|^n}{t^n}.
\end{align}
Hence
\begin{align}\ppgm\geq\frac{1}{m|K|^n}\sum_{1\leq i\leq m,c^n\in K_i^n}\frac{1}{D_{i,c^n}}\geq\frac{1}{m|K|^n}\cdot m|K|^n\frac{1}{1+\frac{m}{t^n}}=\frac{1}{1+\frac{m}{t^n}}
\end{align}
and the error probability is bounded above by $2\frac{m}{t^n}$, i.e. the sample complexity is at most $O(\frac{\log m}{\log t})$.
\end{proof}

\section{Conclusion}
\label{sec:conclusion}

In this paper we studied quantum state discrimination of geometrically uniform ensembles, a task appearing in various settings such as port-based teleportation \cite{ishizaka2008asymptotic,ishizaka2009quantum}, the hidden subgroup problem \cite{bacon2005semidirect,bacon2006dihedral,childs2010algebraic} or learning tasks \cite{arunachalam2020quantumcoupon,hadiashar2024optimal}.
In the first part of the paper we adopted a representation-theoretic viewpoint and discussed GU ensembles arising from an irreducible representation of a finite group.
This setting gives instances where the $\alpha$-power-pretty good measurement (PGM) \cite{tyson2009weighted,tyson2009two-sided} becomes increasingly better with increasing weight $\alpha$, and converges to the optimal measurement in the limit $\alpha\to \infty$.
Since the usual PGM corresponds to setting $\alpha=1$, this also provides an example where the PGM is provably sub-optimal, in contrast to tasks such as port-based teleportation \cite{studzinski2017port,leditzky2020} or various variants of the hidden subgroup problem \cite{bacon2005semidirect,bacon2006dihedral,moore2007conjugate} where the PGM is optimal.
In the second part we focused on pure-state GU ensembles, giving a new simplified proof of the result in \cite{eldar2004optimal} that the PGM is optimal in this setting, and a new derivation of the success probability of the PGM for solving the dihedral hidden subgroup problem \cite{bacon2006dihedral}.
We also made connections of this setting to the quantum coupon collector problem \cite{arunachalam2020quantumcoupon,hadiashar2024optimal}.
Finally, we considered arbitrary (mixed-state) GU ensembles in the many-copy setting and gave a compact lower bound on the success probability of the PGM.
This allowed us to rederive an upper bound on the sample complexity of the hidden subgroup problem that first appeared in \cite{hayashi2006hiddensubgroup}.

In the course of this paper we identified various learning tasks that feature GU state discrimination problems, most notably various versions of the hidden subgroup problem and the quantum coupon collector problem.
In future work we aim to explore further applications of our results to these and other learning tasks.
A major goal is to leverage the GU structure of the corresponding state discrimination task to derive strong bounds on the sample complexity of solving the task.
Another interesting question is to find quantum information-theoretic applications of the GU ensembles based on Werner state generators and permutation-invariant generators discussed in \Cref{sec:generators-symmetry}.

\paragraph*{Acknowledgments}
We acknowledge helpful discussions with Ashwin Nayak about \cite{hadiashar2024optimal}, with Shiliang Gao about \cref{eq: PBT decomposition}, and with the participants of the conference \emph{Beyond IID in Information Theory 12}, July 29 to August 2, 2024, University of Illinois Urbana-Champaign, during which part of this work was completed. This research was supported by a grant through the IBM-Illinois Discovery Accelerator Institute.
	
\printbibliography[heading=bibintoc]
\end{document}